\def\Si{\Sigma}
\def\si{\sigma}
\def\lb{\lambda}
\def\Op{\mathfrak{Op}}
\def\hb{\hbar}
\def\H{\mathcal{H}}
\def\de{\mathrm{d}}
\def\AA{\mathfrak A}
\def\A{{\mathcal A}}
\def\R{\mathbb R}
\def\sp{\mathop{\mathrm{sp}}\nolimits}
\newtheorem{proposition}{Proposition}[section]
\newtheorem{Theorem}{Theorem}[section]
\newtheorem{Remark}{Remark}[section]
\newtheorem{Definition}{Definition}[section]
\newtheorem{Lemma}{Lemma}[section]
\newtheorem{Corollary}{Corollary}[section]
\begin{document}
\title{Canonical Quantization of Constants of Motion.}

\date{\empty}
\author{Fabi\'an Belmonte}

\maketitle
\textbf{Address}: Departamento de Matem\'aticas, Universidad Cat\'olica del Norte, Angamos 0610, Antofagasta, Chile; \textbf{email: fbelmonte@ucn.cl}
\vspace{1cm}

\begin{abstract}
We develop a quantization method, that we name {\it decomposable Weyl quantization}, which ensures that the constants of motion of a prescribed finite set of Hamiltonians are preserved by the quantization. 

Our method is based on a structural analogy between the notions of reduction of the classical phase space and diagonalization of selfadjoint operators. We obtain the spectral decomposition of the emerging quantum constants of motion directly from the quantization process.

If a specific quantization is given, we expect that it preserves constants of motion exactly when it coincides with decomposable Weyl quantization on the algebra of constants of motion. We obtain a characterization of when such property holds in terms of the Wigner transforms involved. We also explain how our construction can be applied to spectral theory.    

Moreover, we discuss how our method opens up new perspectives in formal deformation quantization and geometric quantization. 

\end{abstract}

Keywords: Canonical quantization, constants of motion, symplectic reduction, spectral decomposition, Wigner transform.

\section{Introduction}
The concept of quantization emerged from the identification of the analogies between the mathematical description of classical and quantum mechanics. The initial idea was to find a way to transform classical observables (i.e. smooth functions on the phase space) into quantum observables (i.e. selfadjoint operators on a Hilbert space) in a physically meaningful manner. The transformation should depend on a parameter $\hb$, interpreted as Planck's constant, and we should be able to recover, from the limit $\hb\to 0$, most of the classical objects with a quantum counterpart that depends on the observables (Poisson bracket, Hamiltonian flows etc.).

The latter is known as canonical quantization and the most important example of it is the Weyl quantization \cite{We, Gro} (see \cite{TE1} for a quite complete review of the different approaches to quantization). 

This work is motivated by the identification of an analogy that we consider important and, perhaps surprisingly, has not been considered in the literature of canonical quantization. Namely, the mathematical descriptions of the notion of constant of motion in both classical and quantum mechanics. More precisely, in this work we shall address the question of whether it is possible to quantize classical constants of motion into quantum constants of motion. To this end, we will exploit two features of symplectic geometry and operator theory that can be considered analogous. The first is that the process of reduction of a phase space (in symplectic geometry) resembles the process of the diagonalization of selfadjoint operator (in operator theory), and the second is that these processes allow to decompose the constants of motion accordingly in their respective descriptions.

To the state of knowledge of the authors, the quantization of constants of motion has not been studied in the context of formal deformation quantization or strict deformation quantization, but has been addressed in the context of geometric quantization during the last decades  \cite{GS,Go}. However, the mathematical assumptions usually imposed in geometric quantization are too restrictive to study the mapping of some important constants of motion. For example, the constants of motion in the canonical phase space $\R^{2n}$ for the position Hamiltonians, momentum Hamiltonians, and the free Hamiltonian, cannot be studied using current geometric quantization methods. 

Therefore, although our method is developed within the framework of canonical quantization, its results have the potential to contribute to the study of constants of motion in other approaches to quantization.

Let us introduce some important mathematical constructions of classical and quantum theory that will serve as a framework for this article. 

Let $\Si^{2n}$ be a real symplectic manifold and denote by $\{\cdot,\cdot\}$ the corresponding Poisson bracket on $C^\infty(\Si)$. Also let $h_1,\cdots, h_k\in C^\infty(\Si)$ be a finite family of complete real Hamiltonians (i.e. the flow along the corresponding Hamiltonian vector field is defined on $\R$) such that $\{h_i,h_j\}=0$ for $1\leq i,j\leq k<n$ (simultaneous Hamiltonians), and denote by $\Phi^1,\cdots,\Phi^k$ their respective flows.

Also let $J=(h_1,\cdots,h_k):\Si\to\R^k$ and define $\Phi_{t_1,\cdots,t_k}:=\Phi^1_{t_1}\circ\cdots\circ\Phi^k_{t_k}$. Since the flows $\Phi^1,\cdots,\Phi^k$ pairwise commute, $\Phi$ defines a Hamiltonian action of $\R^k$ on $\Si$. Then, for each regular $\lb\in J(\Si)$, the constant energy level submanifold $\hat{\Si}_\lb:=J^{-1}(\lb)\subseteq \Si$ is invariant under $\Phi$, and it turns out that, the orbits space $\Si_\lb:=\hat{\Si}_\lb/\Phi$ is a symplectic manifold of dimension $2n-2k$ (endowed with the symplectic form given by Marsden-Weinstein-Meyer reduction considering $J$ as the required equivariant moment map \cite{AM,MW,Me,M}; our particular case is sometimes called Jacobi-Liouville theorem). 
 
On the quantum side (see \cite{D} or \cite{BS} for details), let $H_1,\cdots, H_k$ be a finite family of pairwise strongly commuting selfadjoint operators on a Hilbert space $\H$ and $\sp (H_1,\cdots,H_k)$ its joint spectrum. Then there is a unique Borel measure $\eta$ (up to equivalence) on $\sp (H_1,\cdots,H_k)$, a unique measurable field of Hilbert spaces $\{\sp (H_1,\cdots,H_k)\ni\lb\to\H(\lb)\}$ (up to unitary equivalence on $\eta$-almost every fiber), and a unique unitary operator $T:\H\to\int_{\sp(H_1,\cdots,H_k)}^{\oplus}\H(\lambda)\de\eta(\lambda)$ such that
$$
[T\varphi(H_1,\cdots H_k) u](\lambda)=\varphi(\lb)(Tu)(\lambda)\,\,\,\forall u\in \emph{Dom}(\varphi(H_1,\cdots, H_k)).
$$
where $\varphi$ is any Borel function on $\sp(H_1,\cdots,H_k)$ and $\varphi(H_1,\cdots,H_k)$ denotes the corresponding operator given by the functional calculus. In particular we have that
$$
[TH_j u](\lambda)=\lambda_j(Tu)(\lambda)\,\,\,\forall u\in \emph{Dom}(H_j).
$$
and
\begin{equation}\label{UD}
[Te^{i(t_1H_1+\cdots t_k H_k)} u](\lambda)=e^{i(t_1\lb_1+\cdots +t_k\lb_k)}(Tu)(\lambda)\,\,\,\forall u\in \H, (t_1,\cdots,t_k)\in\R^k.
\end{equation}

$T$ is called the simultaneous diagonalization of the family $H_1,\cdots, H_k$.

The analogy between both constructions is evident: Both constructions are meant to make each Hamiltonian constant on each fiber and the corresponding dynamic trivial. So, heuristically, $\H(\lb)$ is the quantum counterpart of $\Si_\lb$. Interestingly, the latter analogy is somehow a particular case of how constants of motion are represented in classical and quantum theories:

A classical observable $f\in C^\infty(\Si)$ is a constant of motion if $\{f,h_j\} = 0$ for each $j$. Leibniz's rule and Jacobi identity show that the set $\A$ of all constants of motion forms a Poisson subalgebra of $C^\infty(\Si)$. It is easy to show that $f\in\A$ if and only if $f\circ\Phi_t=f$, for each $t\in\R^k$. Let $\pi_\lb:\hat\Si_\lb\to\Si_\lb$ be the quotient map. Thus, for each $f\in\A$, we   
 can consider the field of functions $f_\lb\in C^\infty(\Si_\lb)$ given by 
$$
f_\lb(\pi_\lb(\sigma))=f(\sigma),
$$
where $\sigma\in\hat\Si_\lb$. In particular, we can consider the flow $\Phi_t[f_\lb]$ of $f_\lb$ in $\Si_\lb$. It is not difficult to show that  $\Phi_t[f_\lb]\circ \pi_\lb=\pi_\lb\circ\Phi_t[f]$, where $\Phi_t[f]$ is the flow of $f$ in $\Si$. This a particular case of theorem 4.3.5 in \cite{AM}.

Clearly, if $\varphi\in C^\infty(\R^k)$, then $f=\varphi\circ J$ is a constant of motion. Since $f_\lb$ is constant on $\Si_\lb$, the flow $\Phi_t[f_\lb]$ is trivial  in this case.  

Similarly, if $\psi\in C^\infty(\R)$ and $f\in\A$, then $\psi\circ f$ is also a constant of motion and $\Phi_t[\psi\circ f](\si)=\Phi_{t\psi'[f(\si)]}[f](\si)$. 

A quantum observable, i.e. a selfadjoint operator $F$, is a quantum constant of motion, if $F$ strongly commutes with each $H_j$. It is well known that $F$ is a constant of motion if and only if it admits a decomposition through $T$ \cite{D,BS}, i.e. there is a measurable field of selfadjoint operators $\{\sp(H_1,\cdots,H_k)\ni\lb\to F_\lb\}$ such that $[TFu](\lb)=F_\lb[Tu(\lb)]$. Such field of operators is the quantum counterpart of the field of classical observables $f_\lb$ above.   

It is easy to prove that, if $F$ is a quantum constant of motion, then $\psi(F)$ is also a constant of motion and $\psi(F)_\lb=\psi(F_\lb)$, where $\psi$ is any Borel function on $\R$. In particular, the decomposition of the quantum dynamics defined by $e^{itF}$ is given by $e^{itF_\lb}$. The latter is the quantum version of the particular case of theorem 4.3.5 in \cite{AM} described above in the classical context.     

Notice that the set $\AA$ of bounded quantum constants of motion is a von Neumann algebra isomorphic with the commutant of $L^\infty(\sp(H_1,\cdots,H_k),\eta)$.

To complete the analogy with the statements about classical constants of motion above, notice that we know that $\varphi(H_1,\cdots,H_k)$ is a quantum constant of motion for any Borel function $\varphi$ on $\R^k$ and $\varphi(H_1,\cdots,H_k)_\lb$ is the operator given by multiplication by the constant $\varphi(\lb)$. In particular, the dynamic of $\varphi(H_1,\cdots,H_k)_\lb$ is trivial (just as in the classical case).

In order to relate the analogies stated above to canonical quantization, note that besides sending functions into operators, a canonical quantization might be required to satisfy some extra conditions such as the preservation of position and momentum by the quantization map.

Under the light of the analogy presented above, our aim is to introduce a canonical quantization that is meant to satisfy a new requirement: Given a  set of $k$ commuting classical observables (simultaneous Hamiltonians, with $k<n$) with a corresponding set of $k$ commuting quantum observable, the quantization must be defined over the algebra of classical constants of motion, and must send this algebra into quantum constants of motion.

The reason why we pursue this goal is that the analogies developed above suggest that the notion of constant of motion is in some sense independent of how we decide to represent a physical system (either with classical mechanics or with quantum mechanics), so  we consider necessary to develop
a quantization preserving constants of motion.

The analogies stated above suggests that we can assume the following two properties for the canonical quantization of the constants of motion of our initial system:  i) if $h_j$ is quantized as $H_j$, then $\sp(H_1,\cdots,H_k)\subseteq J(\Si)\backslash\mathcal I$ $\eta$-a.e., where $\mathcal I$ is the set of singular values of $J$, and ii) $\Si_\lb$ can be quantized on $\H(\lb)$. Another justification for these assumptions can be deduced from \cite{Lari,Lacs}, where Rieffel induction is proposed as the quantum counterpart of Marsden-Weinstein-Meyer reduction.

The relation between the range of $J$ and the joint spectrum is a delicated issue which we shall not discuss here. However, we must mention that the inclusion $\sp(H_1,\cdots,H_k)\subseteq J(\Si)\backslash\mathcal I$ seems to be the main obstruction to make our forthcoming construction to work generically. Notice that in the semi-classical limit this problem vanishes. For details we refer to \cite{PN,PPN}, where it was shown that, under some suitable conditions and in some sense, the convex hull of the range of $J$ in the semi-classical limit coincide with the joint spectrum. For $k=1$, we also recommend \cite{AMP}.

Under our assumptions, denote  by $\Op^\lb_\hb$ the quantization mapping functions on $\Si_\lb$ into selfadjoint operators on $\H(\lb)$. Let $f\in\A$ and assume that $\Op_\hb^\lb$ is defined on $f_\lb$ for almost every $\lb$; thus we can essentially define the field of operators $\{ \sp(H_1,\cdots,H_k)\ni\lb\to \Op_\hb^\lb(f_\lb)\}$. Therefore, we can consider the operator $\int_{  \sp(H_1,\cdots,H_k)}^{\oplus}\Op_\hb^\lb(f_\lb)\de\eta(\lambda)$ defined fiberwise on a suitable domain in the Hilbert space
$\int_{ \sp(H_1,\cdots,H_k)}^{\oplus}\H(\lb)\de\eta(\lambda)$.

We then define the decomposable Weyl quantization $\Op^d_\hb$ (informal definition) as
$$
\Op^d_\hb(f):=T^*\left[\int_{ \sp(H_1,\cdots,H_k)}^{\oplus}\Op_\hb^\lb(f_\lb)\de\eta(\lambda)\right]T.
$$

Notice that by construction, $\Op^d_\hb(\varphi\circ J)=\varphi(H_1,\cdots,H_k)$, in particular $\Op^d_\hb(h_j)=H_j$.

In order to study $\Op^d_\hb$, we will need to endow the spaces involved in our framework with volume forms preserving somehow the relation between them. For instance, we will show that there is a natural volume form $\zeta$ on the full orbit space $P:=\Si/\Phi$ and a $k$-form $\alpha$ on $\Si$ such that $\alpha\wedge\pi^*\zeta$ is the Liouville form on $\Si$, where $\pi:\Si\to P$ is the quotient map. Moreover, the restriction of $\alpha$ to each orbit $[\si]$ defines a volume form invariant by $\Phi|_{[\si]}$. The list of properties satisfied by those forms that we would like to emphasize are summarized in theorem \ref{main} and corollary \ref{Weins}.

It is desirable to find the conditions under which a Wigner transform can be defined for our new quantization $\Op^d_\hb$. Assume that each $\Op^\lb_\hb$ admits a Wigner transform $W^\lb_\hb$. That is, for any $u,v\in\H(\lb)$ there is a function $W_{\hb}^\lb(u,v)$ on $\Si_\lambda$ such that
\begin{equation}\label{wig}
  \langle\Op^\lb_\hb(f)u,v\rangle=\int_{\Si_\lb}f W_\hb^\lb(u,v)\,\zeta_\lb,  
\end{equation}

where $\zeta_\lb$ is the Liouville volume form on $\Si_\lb$. It is well known that $P$ is a Poisson manifold and its symplectic leaves are the submanifolds $\Si_\lb$. Smooth functions on $P$ correspond to constants of motion through the pull back of the canonical projection $\pi:\Si\to P$. Assume that the family $H_1,\cdots, H_k$ have absolutely continuous joint spectrum and define the volume form $\zeta_H:=(\frac{\de \eta}{\de\lb}\circ J)\zeta$ on $P$, where $\frac{\de \eta}{\de\lb}$ is the Radon-Nikodym derivative of $\eta$ with respect to the Lebesgue measure. Then the following formula holds
$$
\int_P f\,\zeta_H= \int_{J(\Si)}\int_{\Si_\lb}f\de\zeta_\lb\de\eta(\lb),
$$
for any integrable function $f$ on $P$. For each $u,v\in\H$, 
let $W_\hb^d(u,v)([\si]):=W_\hb^{J(\si)}(u_{J(\si)},v_{J(\si)})([\si])$, where $u_\lb:=Tu(\lb)$; in other words we define $W_\hb^d$ gluing on $P$ the values of each Wigner transform $W_\hb^\lb$. We call $W_\hb^d$ the decomposable Wigner transform. Indeed, assuming that $W_\hb^d (u,v)$ is a Borel function on $P$ we have that
\begin{equation}
\langle\Op^d_\hb(f)u,v\rangle=\int_{P}f_d W_\hb^d(u,v)\,\zeta_H,
\label{decomp-wigner}    
\end{equation}

where $f_d$ denotes the function on $P$ corresponding to the constant of motion $f$. Formula~\ref{decomp-wigner} can be useful to study $\Op_\hb^d$. For instance, using the Riesz representation theorem, we can identify the domain of $\Op_\hb^d(f)$ and obtain a basic characterization of the conditions under which $\Op_\hb^d(f)$ defines a bounded operator. Another possible approach to study $\Op_\hb^d(f)$ is to look for a suitable subspace $Q_f\subseteq\H$ such that the right hand side of Eq.~\ref{decomp-wigner} defines a quadratic form. Additionally, if $W_\hb^d(u,v)$ is smooth and has compact support, Eq.~\ref{decomp-wigner} also allows to define $\langle\Op_\hb^d(f)u,v\rangle$ even if $f$ is a distribution on $P$.

Back to the quantization problem, we have by construction that $\Op^d_\hb$ satisfies
$$
\{f,h_j\}=0\Rightarrow [\Op^d_\hb(f),H_j]=0.
$$

However, we do not know if this property is satisfied by other quantizations. For instance, if $\Si=\R^{2n}$ and $\Op_\hb$ denotes the canonical Weyl quantization, then we only know that
$$
\{f,h_j\}=0\Rightarrow [\Op_\hb(f),H_j]=\mathcal O(\hb^2).
$$

In addition, the Groenewold-Van Hove's no go theorem \cite{Gro,vHo,Fol} implies that in general we do not have that $\Op_\hb(\{f,g\})=\frac{1}{\hb}[\Op_\hb(f),\Op_\hb(g)]$. Thus, 
this theorem suggests that it might not be always the case that quantizations map all classical commuting observables into quantum commuting observables; in other words it seems that a given quantization might fail to preserve constants of motion generically. 

We will use decomposable Weyl quantization as a tool to approach the questions of when and how a given quantization $\Op_\hb$ preserves the constants of motion of certain set of classical commuting observables.

In the case that $\Op_\hb$ does preserve constants of motion for certain $h_1,\cdots, h_k$, 
knowing that $f\in\A$ and $\Op_\hb(f)$ is selfadjoint, do not necessarily allow to obtain the explicit expression of the field of operators $\{\sp(H_1,\cdots,H_k)\ni\lb\to[\Op_\hb(f)]_\lb\}$ that decompose $\Op_\hb(f)$ through $T$. Interestingly, the construction of decomposable Weyl quantization might give the answer to such problem: We could expect that under reasonable conditions we should have that $[\Op_\hb(f)]_\lb=\Op^\lb_\hb(f_\lb)$, or equivalently $\Op_\hb(f)=\Op^d_\hb(f)$; in such case we say that we have {\it commutation of canonical quantization and reduction} on $f$ (CQR for short).

Notice that finding $[\Op_\hb(f)]_\lb$ is important for spectral theory because the spectrum of $\Op_\hb(f)$ is the closure of the union over $\lb\in \sp(H_1,\cdots,H_k)$ of the spectrum of each $[\Op_\hb(f)]_\lb$.

Heuristically, $\Op_\hb$ preserves constants of motion in the semi-classical limit. So, one could expect that $\Op_\hb^d$ and $\Op_\hb$ coincide in the limit $\hb\to 0$. In such case, we say that we have {\it semi-classical commutation of quantization and reduction} (SCQR for short). 

We will characterize commutation of canonical quantization and reduction in terms of the Wigner transforms involved, when the family $H_1,\cdots, H_k$ have absolutely continuous joint spectrum. Formally, our result states that we will have CQR on every constant of motion iff $(\frac{\de\eta}{\de\lb}\circ J)(\si) W_\hb^d(u,v)([\si])=\int_{[\si]}W_\hb(u,v)\alpha_{[\si]}$ for each $u,v\in\H$ and $\si\in\Si$, where $W_\hb$ is the Wigner transform of $\Op_\hb$ and $\alpha_{[\si]}$ is the restriction of $\alpha$ to $[\si]$ (theorem \ref{char}). Another characterization can be obtained in terms of the average $\int_{\R^k} W_\hb(u,v)\circ \Phi_t(\si)\de t$ using corollary \ref{Weins}. 

In order to show the usefulness of our construction we develop  examples where the conditions to build $\Op_\hb^d$ are satisfied, and we show how to derive explicit expressions of $\Si_\lb$ and $\H(\lb)$.

Namely, consider $\Si=\R^{2n}$ endowed with the canonical symplectic form and $h_j(x,\xi)=\phi_j(x)$, where each $\phi_j$ belongs to $C^\infty(\R^n)$. Also let $\tilde J=(\phi_1,\cdots\phi_k)$ and $\Omega_\lb=(\tilde J)^{-1}(\lb)$. In this case, clearly $\hat\Si_\lb=\Omega_\lb\times\R^n$, and we are going to prove that $\Si_\lb\cong T^*\Omega_\lb$ (theorem \ref{sym}). On the quantum side, we quantize each $h_j$ as $H_j=\phi_j(Q)$, i.e. $H_j$ is the multiplication operator given by $\phi_j$ acting in $L^2(\R^n)$. Then $\sp(H_1,\cdots,H_k)=\overline{ J(\R^{2n})}$. We are going to show that $\eta(A)=\mathcal L^k(A\cap  J(\R^{2n}))$, where $\mathcal L^k$ is the Lebesgue measure on $\R^k$ (lemma \ref{mea}). It will become clear that $\{ J(\R^{2n})\backslash\mathcal I\ni\lb\to L^2(\Omega_\lb)\}$ forms a $\eta$-measurable field of Hilbert spaces and, using the co-area formula, we are going to find $T$ explicitly with values in the direct integral of $L^2(\Omega_\lb)$ respect to $\eta$ (theorem \ref{DI}). 

We will show that the same statements of the previous paragraph hold for $h_j=\tilde h_j\circ S$, when $S$ is a linear symplectomorphism and $\tilde h_j(x,\xi)=\phi_j(x)$, where $\phi_j\in C^\infty(\R^n)$. The main tool to prove the latter result is the metaplectic representation. Notice that $f$ is constant of motion for the family $\tilde h_j$ iff $f\circ S$ is a constant of motion for the family $h_j$. The details and proofs of the statements above concerning reduction and diagonalization can be found in section \ref{mec}. 

Applying the criteria given by theorem \ref{char}, we show that it is impossible to obtain CQR on {\it every} constant of motion for our examples, unless each $h_j$ is linear \ref{CLW}. The main obstacle is that $\tilde W(u,u)=\int_{[\si]}W_\hb(u,u)\alpha_{[\si]}$, with $[\si]\in\Si_\lb$, does not depend only on $u|_{\Omega_\lb}$, so it does not define a Wigner transform. However, perturbing in a natural way $\tilde W$ we do obtain a Wigner transform that was already defined by Landsman (\cite{LaYM} or\cite{La} II.3). Therefore, we can construct $\Op^d_\hb$ in the examples described above and we expect SCQR. Furthermore, we show $CQR$ for an important class of constant of motion \ref{ecqr}. We also include a discussion concerning a wider class of constant motion of a very important examples where we still expect CQR.

We finish this article with section \ref{OQ}, where we explain very briefly how our problem has been studied from the perspective of geometric quantization, what are the differences with our approach, and what could be done within the framework of deformation quantization. 

\section{Volume forms and integration identities.}

We begin providing some integration properties over the spaces involved in our framework. In particular, we shall endow the Poisson manifold of orbits $P:=\Si/\Phi$ with a non-degenerate volume form $\zeta$ suitable for our purpose. Theorem \ref{main} and its corollary \ref{Weins} are interesting by themselves within the framework of symplectic geometry.
 
Recall that the symplectic leaves of $P$ are precisely the submanifolds $\Si_\lb$ and $C^\infty(P)$ can be identified with the algebra of constants of motion $\A$. For each $\si\in\Si$, we will denote by $[\si]$ the corresponding orbit. We shall denote by $\Theta$ and $\Theta_\lb$ the symplectic forms on $\Si$ and $\Si_\lb$ respectively, and by $\omega$  and $\zeta_\lb$ the corresponding (Liouville) volume forms. We will also denote by $X_1,\cdots, X_k$ the Hamiltonian vector fields corresponding to $h_1,\cdots, h_k$ respectively. 
\begin{Lemma}\label{mainL}
Let $M^m$ and $N^l$ be smooth manifolds endowed with volume forms $\nu$ and $\mu$ respectively such that $m>l$ and $\mu$ is non-degenerate. Also let $A:M\to N$ be a smooth regular function. There is a $(m-l)$-form $\gamma$ on $M$ such that $\nu=\gamma\wedge A^*\mu$. In particular, for any integrable function $f$ on $M$ we have that
$$
\int_M f\,\nu=\int_{A(M)}\left(\int_{A^{-1}(p)}f \,\gamma_p\right)\mu(p),
$$
where $\gamma_p=i_p^*\gamma$ and $i_p:A^{-1}(p)\to M$ is the inclusion map.
\end{Lemma}
\begin{proof}
Let $(U,x_1,\cdots, x_m)$ be a chart on $M$ and $(V,p_1,\cdots,p_l)$ be a chart on $N$ such that $A(U)\subseteq V$. The regularity of $A$ implies that there is a $l\times l$ submatrix of the Jacobian matrix of $A$ with non-trivial determinant, and the implicit function theorem implies that we can obtain another chart on $M$ by removing the $l$ coordinates corresponding to such submatrix from $(U,x_1,\cdots, x_m)$ and replacing them by $y_j:=p_j\circ A$. Notice that, for each $p\in V$, only removing the latter $l$ coordinates (without replacing them) also defines a chart $(U\cap A^{-1}(p),x_{i_1},\cdots,x_{i_{m-l}})$ on $A^{-1}(p)$.  

For simplicity, assume that $det(\frac{\partial y_j}{\partial x_i})_{i,j\leq l}\neq 0$.
Let $g$ and $\varphi$ smooth functions on $U$ and $V$ respectively such that $\nu|_U=g\de x_{l+1}\wedge\cdots\wedge\de x_{m-l}\de y_1\wedge\cdots\wedge\de y_l$ and $\mu|_V=\varphi\de p_1\wedge\cdots\wedge\de p_l$. Since $\varphi(p)\neq 0\forall p\in V$, we have that
$$
\nu|_U=\frac{g}{\varphi\circ A}\de x_{l+1}\wedge\cdots\wedge\de x_{m-l}\wedge A^*(\mu)|_U.
$$
Therefore, taking $\gamma|_U=\frac{g}{\varphi\circ A}\de x_{l+1}\wedge\cdots\wedge\de x_{m-l}$ and using a partition of unity, we obtain our result
\end{proof}
\begin{Theorem}\label{main}
There are a non-degenerate volume form $\zeta$ on $P$, a $(2n-2k)$-form $\beta$ on $P$ and a $k$-form $\alpha$ on $\Si$ such that: 
\begin{enumerate}
\item[a)] $\zeta=\beta\wedge\de \tilde h$, where $\de \tilde h=\de \tilde h_1\wedge\cdots\wedge\de\tilde h_k$ and $\tilde h_j\in C^\infty(P)$ is given by $\tilde h_j([\si])=h_j(\si)$. Moreover, $ j_\lb^*\beta=\zeta_\lb$, where $j_\lb:\Si_\lb\to P$ is the inclusion map.

\item [b)] $\omega=\alpha\wedge\pi^*\zeta$, where $\pi:\Si\to P$ is the quotient map. Moreover $i(X)\omega=\alpha(X_1,\cdots,X_k)\pi^*\zeta$ and $i^*_{[\si]}\alpha$ is invariant by $\Phi_t|_{[\si]}$, where $i(X)\omega$ is the consecutive interior product of $\omega$ by all the vector fields $X_1,\cdots X_k$ and $i_{[\si]}:[\si]\to\Si$ is the inclusion map. If $\alpha'$ is another $k$-form such that $\omega=\alpha'\wedge\pi^*\zeta$, then $i^*_{[\si]}\alpha'=i^*_{[\si]}\alpha$.

\item [c)] For every integrable function $f$ on $P$, we have that 
$$
\int_Pf\zeta= \int_{J(\Si)}\left(\int_{\Si_\lb}f \,\zeta_\lb\right)\de\lb.
$$
\item [d)] For every integrable function $f$ on $\Si$, we have that  
$$
\int_\Si f\,\omega =\int_P\left(\int_{[\si]}f\, i^*_{[\si]}\alpha\right)\zeta([\si]).
$$
\end{enumerate}
\end{Theorem}
\begin{proof}
We begin endowing $P$ with a non-degenerate volume form $\zeta_0$ (i.e. we will show that $P$ is orientable). Since the symplectic form $\omega$ is non-degenerate and the co-vectors $\de h_1(\si),\cdots,\de h_k(\si)$ are linearly independent, the vectors $X_1(\si), \cdots, X_k(\si)$ are also linearly independent, for every $\si\in\Si$. Moreover, since $\pi$ is a submersion, $\text{dim}(\text{Ker}(D\pi_\si))=k$, for every $\si\in\Si$. Therefore, $\text{Ker}(D\pi_\si)$ is the space generated by the vectors $X_1(\si), \cdots, X_k(\si)$, for every $\si\in\Si$. The latter fact allow us to define, for each $\tilde v_1,\cdots,\tilde v_{2n-k}\in T_{[\si]}P$, 
$$
(\zeta_0)_{[\si]}(\tilde v_1,\cdots,\tilde v_{2n-k})=(i(X) \omega)_\si(v_1,\cdots, v_{2n-k}),
$$
where $ v_1,\cdots, v_{2n-k}\in T_{\si}\Si$ are such that $D\pi_\si(v_j)=\tilde v_j$, for all $j=1,\cdots, 2n-k$. Since $\omega$ is non-degenerate and the vectors $X_1(\si), \cdots, X_k(\si)$ are linearly independent, $\zeta_0$ is also non-degenerate.

Define $ J^d:P\to\R^k$ by $J^d([\si])=J(\si)$, for each $\si\in\Si$. Since $J$ is constant on each orbit, $J^d$ is well defined, smooth and $J^d\circ \pi=J$. Moreover, $D J^d(D\pi(v))=DJ(v)$, thus $J^d$ is also regular. Therefore, the previous lemma implies that there is a $(2n-2k)$-form $\beta_0$ on $P$ such that $\zeta_0=\beta_0\wedge\de\tilde h$. Since $j_\lb^*\beta_0$ is a volume form on $\Si_\lb$, there is a smooth function $g$ on $P$ such that $g j_\lb^*\beta_0=\zeta_\lb$. Thus, defining $\zeta=g\,\zeta_0$ and $\beta=g\,\beta_0$, we obtain $a)$ and $c)$. 

In order to obtain $\alpha$ satisfying $b)$ and $d)$, we apply the previous lemma over $\pi:(\Si,\omega)\to (P,\zeta)$. 

Let $g$ be a function on $P$ such that $\zeta=g\zeta_0$.  Since $X_1,\cdots,X_k$ are linearly independent, there are vectors $v_1,\cdots,v_{2n-k}$ such that $\omega(X_1,\cdots, X_k,v_1,\cdots,v_{2n-k})\neq 0$. Notice that $i(X)\omega$ vanish whenever we evaluate it on any $(2n-k)$ vectors containing some $X_j$. Moreover, by construction we know that  $g\alpha\wedge i(X)\omega=\omega$, and evaluating both sides of this equation on $(X_1,\cdots, X_k,v_1,\cdots,v_{2n-k})$, we obtain $g\,\alpha(X_1,\cdots,X_k)=1$, which shows the second equation in $b)$.

Let $(V,p_1,\cdots,p_{2n-k})$ be a chart on $P$. Recall from the proof of the previous lemma that, we can always take a chart on $\Si$ such that each $p_j\circ \pi$ is a coordinate. Also notice that, since $\pi^*\zeta$ is invariant by $\Phi_t$, we have that
$$
0=L(X)\omega=(L(X)\alpha)\wedge \pi^*\zeta,
$$
where $L(X)=L(X_1)\cdots L(X_k)$ and $L(X_j)$ is the Lie derivative with respect to the vector field $X_j$. Moreover, since $\pi^*\zeta=f\de (p_1\circ\pi)\wedge\cdots\wedge \de (p_k\circ\pi)$, for some non-vanishing function $f$, there are $(k-1)$-forms $\tau_1,\cdots,\tau_k$ such that
$$
L(X)\alpha=\sum_j \de (p_j\circ\pi)\wedge\tau_j.  
$$
Since $i^*_{[\si]}\de (p_j\circ\pi)=0$, we have that $L(X)i^*_{[\si]}\alpha=i^*_{[\si]}(L(X)\alpha)=0$, i.e. $i^*_{[\si]}\alpha$ is invariant by $\Phi|_{[\si]}$. Finally, if $\alpha'\wedge\pi^*\zeta=\alpha'\wedge\pi^*\zeta$, then $(\alpha-\alpha')\wedge\pi^*\zeta=0$ and, repeating the same argument than before, we have that
$$
\alpha-\alpha'=\sum_j \de (p_j\circ\pi)\wedge\nu_j,
$$
where $\nu_1,\cdots,\nu_k$ are certain $(k-1)$-forms. Therefore $i^*_{[\si]}(\alpha-\alpha')=0$ and this finish the proof of $b)$.
\end{proof}
\begin{Remark}\label{gamma}
{\rm
Since $\omega=\alpha\wedge\pi^*\beta\wedge \de h$, where $\de h=\de h_1\wedge\cdots\wedge \de h_k$, theorem 3.4.15 in \cite{AM} implies that 
$\gamma_\lb:=i^*_\lb(\alpha\wedge\pi^*\beta)$ is a volume form on $\hat\Si_\lb$ invariant by $\Phi_t$. Therefore, for every integrable function $f$ on $\Si$, we have that  
\begin{equation}\label{gamma}
\int_\Si f\,\omega = \int_{J(\Si)}\left(\int_{\hat\Si_\lb}f\,\gamma_\lb\right)\de\lb
\end{equation}

Since $i_\lb^*\Theta=\pi_\lb^*\Theta_\lb$, we have that $\gamma_\lb=i^*_\lb(\alpha\wedge\Theta^{n-k})$. Moreover, for any integrable function on $\tilde\Si_\lb$, we have that
\begin{equation}\label{zeta}
\int_{\hat\Si_\lb}f\,\gamma_\lb=\int_{\Si_\lb}\left(\int_{[\si]}f\, i^*_{[\si]}\alpha\right)\zeta_\lb.
\end{equation}

}
\end{Remark}
Let $L_{[\si]}$ the stabilizer subgroup of the orbit $[\si]$, i.e. $L_{[\si]}=\{t\in\R^{k}/\Phi_t(\si)=\si\}$, where $\si$ is any element of $[\si]$ (clearly the stabilizer depends only of the orbit). Let $G_{[\si]}=\R^k/L_{[\si]}$. It is well known that the map $A_{\si}: G_{[\si]}\to[\si]$ given by $A_{\si}([t])=\Phi_t(\si)$ defines a diffeomorphism.  Since $G_{[\si]}$ is a locally compact group, it has a unique Haar measure $\mu$ (up to some constant positive factor). If $L_{[\si]}=\{0\}$, we will take $\mu$ to be the Lebesgue measure. If $L_{[\si]}\neq\{0\}$, then $L_{[\si]}$ is a lattice (the regularity of $J$ implies that $L_{[\si]}$ has no copy of $\R$ as a subgroup), $G_{[\si]}$ is a tori and we will take $\mu$ normalized, i.e. $\mu(G_{[\si]})=1$.

\begin{Corollary}\label{Weins}
Let $\alpha_{[\si]}=i^*_{[\si]}\alpha$. For each orbit $[\si]$, there is a constant $a([\si])$, such that for any integrable function $f$ on $[\si]$ we have that
$$
\int_{[\si]}f\,\alpha_{[\si]}=a([\si])\int_{G_{[\si]}} (f\circ \Phi_t)(\si)\,\mu(t).
$$
In particular, if $L_{[\si]}\neq 0$, then $a([\si])$ is the volume of $[\si]$ according to $\alpha_{[\si]}$.
\end{Corollary}
\begin{proof}
Let $\si\in\Si$ and consider. Since $A_\si$ is a diffeomorphism, we have that
$$
\int_{[\si]}f\,\alpha_{[\si]}=\int_{G_{[\si]}} (f\circ \Phi_t)(\si)\,A_\si^*\alpha_{[\si]}(t)
$$

For each $s\in G_{[\si]}$, defines $\Psi_{s}:G_{[\si]}\to G_{[\si]}$ by $\Psi_s(t)=t+s$. It is straightforward to check that $\Phi_s\circ A_{\si}=A_\si\circ\Psi_s$. Thus,
$$
 A_{\si}^*\alpha_{[\si]}=A_{\si}^*\Phi_t^*\alpha_{[\si]}=\Psi_t^*A_{\si}^*\alpha_{[\si]}.
$$
Since the Haar measure $\mu$ is the only measure invariant by translation up to some constant factor, there must be a constant $a([\si])$ such that $A_{\si}^*\alpha_{[\si]}=a([\si])\mu$ and this finishes our proof.
\end{proof}
%\begin{Remark}
%{\rm
%If $\pi$ is proper, Ehresmann's lemma implies that either $L_{[\si]}=0; \forall [si]\in P$ or $L_{[\si]}\neq 0; \forall [si]\in P$. 
%}
%\end{Remark}
\section{The decomposable Wigner transform.}

The canonical definition of Weyl quantization through a kernel leads to some technical difficulties which can be overcome introducing the so called Wigner transform \cite{Fol,La}. It is also one the main object in the so called phase space description of quantum mechanics. We shall define an analogue object for our decomposable Weyl quantization. 

In this section we will not need to consider the role of Planck's constant $\hb$, and for simplicity, we shall not include it in the notation.

Using theorem \ref{main} we can endow $P$ with a volume form suitable for our quantization. 
\begin{proposition}
\label{Pintegration}
Assume $H_1,\cdots H_k$ has absolutely continuous joint spectrum and define the (measurable) volume form $\zeta_H=(\frac{\de\eta}{\de\lb}\circ J )\zeta$ on $P$, where $\frac{\de\eta}{\de\lb}$ is the Radon-Nikodym derivative of $\eta$ with respect to the Lebesgue measure. For any $\zeta_H$-integrable function $f$ on $P$ we have that
\begin{equation}\label{zetaH}
\int_Pf\zeta_H= \int_{\sp(H_1,\cdots,H_k)}\left(\int_{\Si_\lb}f\zeta_\lb\right)\de\eta(\lb).
\end{equation}
If $\eta$ is equivalent to the Lebesgue measure on $J(\Si)$, then $\zeta_H$ is non-degenerate.
\end{proposition}
%\begin{proof}
%Let $\varphi$ be a Borel function on $J(\Si)$ such that $\de\eta(\lb)=\varphi(\lb)\de\lb_1\wedge\cdots\wedge\de\lb_k$. Then, the volume form $\zeta_H:=(\varphi\circ J^d)\zeta$ satisfies the required condition, where $J^d([\si])=J(\si)$.
%\end{proof}

The domain of $\Op^d(f)$ is quite difficult to analyze in its current form. Indeed, $u\in Dom(\Op^d(f))$ if: i) $u_\lb\in Dom(\Op^\lb(f_\lb))$, where $u_\lb=Tu(\lb)$ and ii) the section $\{\sp(H_1,\cdots,H_k)\ni\lb\to\Op^\lb(f_\lb)u_\lb\}$ belongs to the direct integral $\int_{\sp(H_1,\cdots,H_k)}^\oplus\H(\lb)\de\eta(\lb)$.   
\begin{Corollary}
Let $u\in\H$ such that $u_\lb\in Dom(\Op^\lb(f_\lb))$ and the section $\{J(\Si)\ni\lb\to\Op^\lb(f_\lb)u_\lb\}$ belongs to the direct integral $\int_{J(\Si)}^\oplus\H(\lb)\de\eta(\lb)$. Let $W^d(u,v)([\si])=W^{J(\si)}(u_{J(\si)},v_{J(\si)})([\si])$, where $[\si]\in P$ denotes de orbit of $\si\in\Si$. Assume that $W^d (u,v)$ is a Borel function on $P$. Thus,
$$
\langle\Op^d(f)u,v\rangle=\int_{P}f_d W^d(u,v) \zeta_H,
$$
where $f_d$ denotes the function on $P$ corresponding to the constant of motion $f$.
\end{Corollary}
\begin{proof}
$$
\langle\Op^d(f)u,v\rangle=\langle \int^\oplus_{\sp(H_1,\cdots,H_k)}\Op^\lb(f_\lb)\de\eta(\lb) T u,T v\rangle=\int_{\sp(H_1,\cdots,H_k)}\langle\Op^\lb(f_\lb)u_\lb,v_\lb\rangle\de\eta(\lb)=
$$
$$
=\int_{\sp(H_1,\cdots,H_k)}\int_{\Si_\lb}f_\lb W^\lb(u_\lb,v_\lb)\zeta_\lb\de\eta(\lb)=\int_{P}f_d W^d(u,v)\,\zeta_H,
$$
\end{proof}
The previous corollary allow us to define $\Op^d(f)$ using the Riesz representation theorem.
\begin{Definition}
We say that $u$ belongs to $Dom(\Op^d(f))$ if there is a dense subspace $\H_u$ in $\H$ such that 
\begin{enumerate}
    \item[i)] $W^d(u,v)$ is Borel measurable in $P$, for any $v\in\H_u$.
    \item[ii)] the map $\H_u\ni v\to\int_P f_d W^d(u,v)\,\zeta_H$ defines a continuous linear functional on $\H_u$.
\end{enumerate}
If $u\in Dom(\Op^d(f))$, we define $\Op^d(f)u$ as the only vector $w\in\H$ such that, for any $v\in\H_u$ we have that  
$$
\langle w,v\rangle=\int_{P}f_d W^d(u,v)\,\zeta_H,
$$
\end{Definition}
Notice that, by the Riesz representation theorem $||\Op^d(f)u||=\sup\{\left|\int_P f_d W^d(u,v)\,\zeta_H\right|/||v||\leq 1\}$. Therefore $\Op^d(f)$ is bounded if and only if 
$$
\sup\left\{\left|\int_P f_d W^d(u,v)\,\zeta_H\right|/||u||,||v||\leq 1\right\}<\infty
$$
In particular, if $f\in L^1(P,\zeta)$, $W^d(u,v)$ is Borel measurable and $||W^d(u,v)||_\infty\leq C$ for $||u||,||v||\leq 1$, then $\Op^d(f)$ is bounded and $||\Op^d(f)||\leq C||f||_1$.

We expect that the family of Wigner transforms $\{W^\lb\}_{\lb\in J(\Si)}$ satisfies some of the properties that canonical Wigner transform does. For instance, we expect that, for each $u_\lb,v_\lb\in \H_\lb$, $W^\lb(u_\lb,v_\lb)\in C_0(\Si_\lb)$ and there is $C>0$ such that
\begin{equation}\label{ineq}
||W^\lb(u_\lb,v_\lb)||_\infty\leq C||u_\lb||_\lb\cdot ||v_\lb||_\lb,   
\end{equation}
If such property holds, then $W^d(u,v)$ is continuous, and if in addition\\ $||f||_{1,\infty}:=\sup_{\lb\in \sp(H_1,\cdots,H_k)}{\int_{\Si_\lb}|f_\lb|\zeta_\lb}<\infty$, then
\begin{align*}
 \left|\int_P f_d W^d(u,v)\de\zeta\right|&\leq  \int_{\sp(H_1,\cdots,H_k)}\int_{\Si_\lb}||W^\lb(u_\lb,v_\lb)||_\infty|f_\lb|\de\zeta_\lb\de\lb \\
 & \leq C||f||_{1,\infty}\int_{\sp(H_1,\cdots,H_k)}||u_\lb||||v_\lb||\de\lb\leq C||f||_{1,\infty}||u||||v||.
\end{align*}
Therefore, $\Op^d(f)$ is bounded and $||\Op^d(f)||\leq C||f||_{1,\infty}$.
%$W^\lb(u,v)\in L^2(\Si_\lb)$ and $\langle W^\lb(u_1,v_1),W^\lb(u_2,v_2)\rangle=\langle u_1,u_2\rangle\overline{\langle v_1,v_2\rangle}$

Another interesting approach to study $\Op^d$ using the decomposable Wigner transform $W^d$ is the construction of quadratics forms. Assume that there is a dense subspace $\H_f$ such that $W^d(u,u)$ is Borel medible and $\int_{P}f_d W^d(u,u)\de\zeta$ is convergent,  for any $u\in\H_f$. Polarization identity implies that $\int_{P}f_d W^d(u,v)\de\zeta$ is convergent,  for any $u,v\in\H_f$, so we can define
$$
q_f(u,v)=\int_{P}f_d W^d(u,v)\de\zeta.
$$
Clearly $q_f$ is a symmetric quadratic form. We will study in future works whether $q_f$ is closed, semi- bounded or positive. 
%Notice that if $f$ is bounded from below, say $f>M$, then 
%$$
%Q_f(u):=q_f(u,u)=\int_{P}f_d W^d(u,u)\de\zeta=\int_{P^+}f_d W^d(u,u)\de\zeta+\int_{P^{-}}f_d W^d(u,u)\de\zeta=
%$$
%Since $\int_{P^+}W^d\de\eta\geq\int_{P^{-}}W^d\de\eta$ and $\int_{P^{-}}f_d W^d(u,u)\de\zeta\leq $
%$$
%\geq M \int_{P} W^d(u,u)\de\zeta=\int_{J(\Si)}\int_{\Si_\lb}W^\lb(u_\lb,v_\lb)\de\zeta_\lb. $$

%Since $\int_{\Si_\lb}W^\lb(u_\lb,v_\lb)\de\zeta_\lb=\langle u_\lb,v_\lb\rangle$, we have that $Q_f(u)\geq M ||u||^2$. Therefore $q_f$ is bonded from below.

Notice that the definition of $\Op^d(f)$ make sense if $f_d$ is Borel measurable, the same happens with the definition of the quadratic form $q_f$. Moreover, if we impose some additional conditions on $W^d$ we can give sense to $\Op(f)$ for a wider class of "functions" $f_d$ on $P$. More precisely, if $W^d(u,v)\in C_c^\infty(P)$ then $\int_{P}f_d W^d(u,v)\de\zeta$ makes sense if $f_d$ is a distribution on $P$. Therefore, if there is a locally convex space $S$ continuously embedded in $\H$ such that $W^d(u,v)\in C_c^\infty(P)$ for all $u,v\in S$ and \eqref{ineq} holds, then $\Op^d(f):S\to S'$ defines a continuous operator for any smooth distribution $f$, where $S'$ is the topological dual of $S$ endowed with the weak* topology.

\section{Commutation of Canonical Quantization and Reduction.}\label{DWC}

In this section we will assume that there is a quantization $\Op_\hb$ previously defined on $\Si$ with a suitable Wigner transform $W_\hb$ such that $\Op_\hb(h_j)=H_j$, for all $j=1,\cdots, k$. 

We would like to know when $\Op_\hb$ preserves the constants of motion of $h_1,\cdots h_k$ and when it fails to do so. We propose to use the decomposable Weyl quantization to approach that problem. We also explain how the ideas we develop in the way can be useful for applications in spectral theory.

Recall that the main reason why we constructed $\Op^d_\hb$ was to guarantee that classical constants of motion of $h_1,\cdots, h_k$ are are mapped into quantum constants of motion of $H_1,\cdots, H_k$. It seems that $\Op_\hb$ might not have that property generically; for instance, Groenewold-van Hove theorem \cite{Gro,vHo, Fol} suggest that Weyl quantization does not have that property in general. However, there are some important examples where $\Op_\hb$ preserves constants of motion. For example, if $\Op_\hb$ is Weyl canonical quantization, using the Moyal product expansion, one can prove that $\Op_\hb\{h_j,f\}=\frac{1}{\hb}[H_j,\Op_\hb(f)]$, whenever $h_1,\cdots, h_k$  are polynomials of degree at most 2; in particular Weyl quantization preserves constants of motion on such cases. However, intertwining the Poisson bracket with the commutator seems to be stronger than preserving constants of motion, so we expect that there should be more examples. 

Assume that $\Op_\hb$ preserves constants of motion for some $h_1,\cdots, h_k$ and let $f\in\A$ such that $\Op_\hb(f)$ defines a selfadjoint operator. Thus, there is a field of operators $\{\sp(H_1,\cdots , H_k)\ni\lambda\to[\Op_\hb(f)]_\lb\}$ decomposing $\Op_\hb(f)$ through $T$ (recall that quantum constant of motion can be decomposed); however in principle we do not know how to find explicitly such field. The definition of $\Op^d_\hb(f)$ start with the construction of a field of operators,  one could expect that this field gives the correct answer for such problem, i.e. we might expect that $[\Op_\hb(f)]_\lb=\Op_\hb^\lb(f_\lb)$. In other words, $\Op_\hb^d$ coincides with $\Op_\hb$ when the latter sent classical constants of motion into quantum constants of motion. 

Even if $f\in\A$ and $[\Op_\hb(f),H_j]\neq 0$, we know that $\lim_{\hb\to 0}[\Op_\hb(f),H_j]=0$. This suggest that $\Op^d_\hb(f)$ and  $\Op_\hb(f)$ might coincide in the limit $\hb\to 0$. The latter ideas motivate the following definitions.
%, but first let us fix some standard notation. Let $S(\R^m)$ denote the Schwartz class on $\R^m$ and $S'(\R^m)$ denote the space of tempered distributions, i.e. the topological dual of $S(\R^m)$. Recall that, if $f\in S'(\R^{2n})$ then $\Op_\hb(f):S(\R^n)\to S'(\R^n)$ ($S'(\R^{2n})$ is the largest space where $\Op_\hb$ is defined).    
\begin{Definition}\label{defCQR}
Let $\Op^d_\hb$ be a decomposable Weyl quantization and $\Op_\hb$ a prescribed canonical quantization. Also, let $f$ be a constant of motion.
\begin{enumerate}
\item[a)]
We say that we have commutation of canonical quantization and reduction on $f$ (CQR) if for each $u\in \text{Dom}(\Op_\hb(f))\cap \text{Dom}(\Op^d_\hb(f))$,
$$
\Op_\hb(f)u=\Op^d_\hb(f)u.
$$
\item[b)]
We say that we have semi-classical commutation of canonical quantization and reduction on $f$ (SCQR)  if for each $u\in \text{Dom}(\Op_\hb(f))\cap \text{Dom}(\Op^d_\hb(f))$,
$$
\lim_{\hb\to 0}\left(\Op_\hb(f)-\Op^d_\hb(f)\right)u=0.
$$
%We denote by $\A_s$ the set of symbols where we have SCQR.
\end{enumerate}

\end{Definition}
Clearly canonical quantization and reduction commute on
$f$ iff, for each $u\in \text{Dom}(\Op_\hb(f))\cap \text{Dom}(\Op^d_\hb(f))$, we have that

\begin{equation*}
[T\Op_\hb(f)u](\lb)=\Op^\lb_\hb(f^\lb)[Tu(\lb)],
\end{equation*}

Notice that by construction, canonical quantization and reduction commutes on $f=h_j$ for each $j=1,\cdots, k$, but this does not necessarily happen for $f=\varphi\circ h_j$, because $\varphi(H_j)$ might not coincide with $\Op(\varphi\circ h_j)$. 
  
\begin{Remark}
{\rm
We borrow the term "commutation of quantization and reduction" from the analogue theory in geometrical quantization; we shall insist that the main results of that theory do not cover our setting.
}
\end{Remark}

\begin{Remark}
{\rm
If $f\in\A$ and $\Op_\hb(f)$ defines a selfadjoint operator, a necessary condition to obtain commutation of quantization and reduction is that $\Op_\hb(f)$ strongly commutes with each $H_j$ (i.e. $\Op_\hb(f)$ is a quantum constant of motion). We do not know if this condition is also sufficient. For instance, we know that Weyl quantization preserves constant of motion of the free Hamiltonian, but we do not know yet for which $f$ we have CQR.
}
\end{Remark}
\begin{Remark}
{\rm
Assume that quantization and reduction commute on $f$. One of the main advantages of such property is the following: if $\Op_\hb(f)$ defines a selfadjoint operator, then we have that
\begin{equation}\label{specdesc}
\sp(\Op_\hb(f))=\overline{\cup_\lb\sp(\Op_\hb^\lb(f^\lb)) }.
\end{equation}
}
\end{Remark}

\begin{Remark}
{\rm
If we have that quantization and reduction commute on $f$ semiclassically, then we would interpret $\Op_\hb^d(f)$ as an effective Hamiltonian for $\Op_\hb(f)$. In this case, we would expect that \eqref{specdesc} holds only in the limit $\hb\to 0$ (we will not explain the meaning of that limit here).  

However, notice that we have not found yet examples where we have that quantization and reduction commute only semi-classically. If there is no example where quantization and reduction commute only semi-classically, it would suggest that $\Op_\hb$ always preserves constants of motion (despite Groenewold-van Hove no go theorem, when $\Op_\hb$ is the canonical Weyl quantization). If there are such examples it would mean that $\Op^d_\hb$ is truly a novel quantization procedure. Both possibilities are remarkable.} 

\end{Remark}

The following result provides necessary and sufficient conditions to guarantee commutation of canonical quantization and reduction on every constant of motion when the family of selfadjoint operators $H_1,\cdots,H_k$ have absolutely continuous joint spectrum.
\begin{Theorem}\label{char}
Assume that $H_1,\cdots,H_k$ have absolutely continuous joint spectrum and let $W_\hb$ be the Wigner transform of $\Op_\hb$. Quantization and reduction commute on every constant of motion if and only if, for any $u,v\in\H$ and $\si\in\Si$, we have that
\begin{equation}\label{cond}
(\frac{\de\eta}{\de\lb}\circ J)(\si)W_\hb^d(u,v)([\si])=\int_{[\si]}W_\hb(u,v)\,i^*_{[\si]}\alpha, 
\end{equation}

or equivalently,
$$
(\frac{\de\eta}{\de\lb}\circ J)(\si)W_\hb^d(u,v)([\si])=a([\si])\int_{G_{[\si]}}(W_\hb(u,v)\circ\Phi _t)\,\mu(t), 
$$
where $\frac{\de\eta}{\de\lb}$ is the Radon-Nikodym derivative of $\eta$ with respect to the Lebesgue measure, $\alpha$ is given in Theorem \ref{main}, $G_{[\si]}$ and $a([\si])$ are given in Corollary \ref{Weins}.
\end{Theorem}
\begin{proof}
Combining $c)$ and $d)$ in theorem \ref{main} we get that
$$
\langle\Op_\hb(f)u,v\rangle=\int_{\Si}f\,W_\hb(u,v)\,\omega=\int_{J(\Si)}\int_{\Si_\lb}f_\lb([\si])\,\int_{[\si]}W_\hb(u,v)\, i^*_{[\si]}(\alpha)\,\zeta_\lb([\si])\,\de\lb.
$$
Similarly,
$$
\langle\Op^d_\hb(f)u,v\rangle=\int_{P}f_d\,W_\hb^d(u,v)\,\zeta_H=\int_{J(\Si)}\int_{\Si_\lb}f_\lb([\si])\,(\frac{\de\eta}{\de\lb}\circ J)(\si)\,W_\hb^d(u,v)([\si])\,\zeta_\lb([\si])\,\de\lb.
$$
Therefore the identity \eqref{cond} implies CQR on every constant of motion. Moreover, if we do have CQR on every constant of motion, then we can replace $f$ above by $(\varphi\circ J)f$ and we obtain 
$$
\int_{J(\Si)}\varphi(\lb)\int_{\Si_\lb}f_\lb([\si])\,\int_{[\si]}W_\hb(u,v)\, i^*_{[\si]}(\alpha)\,\zeta_\lb([\si])\,\de\lb=
$$
$$
=\int_{J(\Si)}\varphi(\lb)\int_{\Si_\lb}f_\lb([\si])\,(\frac{\de\eta}{\de\lb}\circ J)(\si)\,W_\hb^d(u,v)([\si])\,\zeta_\lb([\si])\,\de\lb.
$$
Since $\varphi$ is arbitrary, for each regular $\lb$ we have that
$$
 \int_{\Si_\lb}f_\lb([\si])\,\int_{[\si]}W_\hb(u,v)\, i^*_{[\si]}(\alpha)\,\zeta_\lb([\si])=\int_{\Si_\lb}f_\lb([\si])\,(\frac{\de\eta}{\de\lb}\circ J)(\si)\,W_\hb^d(u,v)([\si])\,\zeta_\lb([\si]).
$$
Since $f$ is arbitrary, we obtain the identity \eqref{cond}, and the last part of our result follows from Corollary \ref{Weins}.
\end{proof}

Recall that each $W_\hb^\lb$ is the Wigner transform of $\Op^\lb$, in particular, we have that 
$$
\langle u_\lb,v_\lb\rangle=\int_{\Si_\lb}W_\hb^\lb(u_\lb,v_\lb)\zeta_\lb.
$$
The latter formula together with \eqref{zeta} implies the following result.
\begin{Corollary}\label{inner}
Assume that $H_1,\cdots,H_k$ have absolutely continuous joint spectrum. A necessary condition to obtain commutation of quantization and reduction on every constant of motion is that the identity
\begin{equation}
\frac{\de\eta}{\de\lb}(\lb)\langle u_\lb,v_\lb\rangle=\int_{\hat\Si_\lb}W_\hb(u,v)\gamma_\lb, 
\end{equation}
holds for each $u,v\in\H$, where $\gamma_\lb$ is defined in remark \ref{gamma}. 
\end{Corollary}
\begin{Remark}
{\rm
In particular, a necessary condition to obtain commutation of quantization and reduction on every constant of motion is that $\int_{\hat\Si_\lb}W_\hb(u,u)\gamma_\lb\geq 0$, for each $u\in\H$. %In a forthcoming work we will show that the latter inequality implies that: i) $\sp(H_1,\cdots, H_k)=\overline{J(\Si)}$, ii) $\Op_\hb$ preserves constants of motion, and iii) $\Op_\hb$ preserves the functional calculus, i.e. $\Op_\hb(\varphi\circ J)=\varphi(H_1,\cdots,H_k)$, for each Borel function $\varphi$ on $\R^k$. 
}
\end{Remark}

\section{Explicit examples}\label{mec}

In this section we are going to show explicit examples such that $\Si_\lb=T^*\Omega_\lb$, $\H(\lb)=L^2(\Omega_\lb)$ and $\sp(H_1,\cdots,H_k)=\overline{J(\Si)}$, where $\Omega_\lb$ is certain Riemannian manifold. 

Fortunately, the problem of quantizing $T^*\Omega$ into $L^2(\Omega)$ for an arbitrary Riemannian manifold $\Omega$ has been studied by many authors. For instance, an interesting solution can be found in Landsman's book~\cite{La} II.3 (or his article~\cite{LaYM}), which also admits a Wigner transform and coincides with Weyl quantization when $\Omega=\R^n$. For alternative solutions see~\cite{LQ,U}. Another interesting quantization scheme with a Wigner transform on symmetric spaces can be found in \cite{TE}.

We will show that theorem \ref{char} implies that, unless each $\Omega_\lb$ is flat, there is no decomposable Weyl quantization such that we have CQR on every constant of motion. The main reason why is that $\int_{[\si]}W_\hb(u,v)\,i^*_{[\si]}\alpha$ cannot define a Wigner transform on $\H(\lb)$ (proposition \ref{LW} and corollary \ref{CLW}). However, using Landsman's Wigner transform $W^\lb_\hb$ (\cite{La} II.3) on each fiber, we can construct $\Op^d_\hb$. Since  $W^\lb_\hb$ is very similar to $\int_{[\si]}W_\hb(u,v)\,i^*_{[\si]}\alpha$, we expect to obtain SCQR on every constant of motion. Moreover, there are some generic examples on which we still have CQR. For instance, we show that $\Op(\varphi \circ J)=\Op^d(\varphi \circ J)=\varphi(H_1,\cdots, H_k)$. We also prove that quantization and reduction commute on each $f=J_Y$, whenever $Y$ is a vector field tangent to each $\Omega_\lb$ and $J_Y(x,\xi)=\langle Y(x),\xi\rangle$. 

In what follows, we will barely need to consider the role played by Planck's constant $\hb$, so we will include it in the notation only if it necessary or it seems important to do so.  

The original phase space in the examples of this section is  $\Si=T^*\R^n$ with its canonical symplectic form, thus we can quantize $\Si$ using the canonical Weyl quantization and consider the corresponding problem of commutation of quantization and reduction. However, we shall leave this problem for the future.

Although the classical Hamiltonians $h_1,\cdots,h_k$ and the quantum Hamiltonians $H_1,\cdots, H_k$ that we consider in this section are well known, recall that the focus of our analysis is not to study the Hamiltonians but their constants of motion in relation to quantization. Also notice that the algebra of constant of motion can be quite large in that cases; for instance, if $k=1$ and $h$ is the free Hamiltonian, then the algebra of constant of motion has a copy of $C^\infty(\R)\otimes C^\infty(\mathfrak{so}(n)^*)$ as a subalgebra (the first term are compositions of $h$ with functions on $\R$ and the second term are functions of angular momenta).

\subsection{The classical side.}
Let us begin giving a simple tool to compute $\Si_\lb$.
\begin{proposition}\label{cequi}
Let $h'_1,\cdots,h'_k\in C^\infty(\Si)$ such that $\{h'_i,h'_j\}=0$. Also, let $S$ be a symplectomorphism on $\Si$ and define $h_j=h'_j\circ S$. Denote by $\Si_\lb$ and $\Si'_\lb$ the reduced spaces corresponding to $h_1,\cdots, h_k$ and $h'_1,\cdots, h'_k$ respectively. The map $ S_\lb:\Si_\lb\to\Si'_\lb$ given by $S_\lb[\si]=[S\si]$ is a symplectomorphism. The map $f\to f\circ S$ defines an isomorphism between the corresponding Poisson algebras of constants of motion.
\end{proposition}
\begin{proof}
It is easy to show that $S(\hat\Si_\lb)=\hat\Si_\lb'$ and $\Phi_t=S^{-1}\circ\Phi'_t\circ S$. Thus, $S_\lb$ is a diffeomorphism. Let $\Theta_\lb$ and $\Theta'_\lb$ be the reduced symplectic forms on $\Si_\lb$ and $\Si_\lb'$ respectively. We must show that $ S_\lb^*(\Theta'_\lb)=\Theta_\lb$. We know that $\Theta_\lb$ is characterized by the condition $\pi_\lb^*(\Theta_\lb)=i^*_\lb(\Theta)$, where $\pi_\lb:\hat\Si_\lb\to\Si_\lb$ is the quotient map, $i_\lb:\hat\Si_\lb\to\Si$ is the inclusion map and $\Theta$ is the symplectic form on $\Si$. Notice that $\pi'_\lb\circ S= S_\lb\circ\pi_\lb$ and $i'_\lb\circ S=S\circ i_\lb$. Then
$$
\pi_\lb^*\circ S_\lb^*(\Theta'_\lb)=(S_\lb\circ\pi_\lb)^*(\Theta'_\lb)=S^*\circ(\pi'_\lb)^*(\Theta'_\lb)=S^*\circ(i'_\lb)^*(\Theta)=i_\lb^*\circ S^*(\Theta)=(i_\lb)^*\Theta.
$$ 
The last part of this proposition is straightforward.     
\end{proof}

In what follows we are going to consider the case $\Si=T^*\R^n$, mainly because we can construct examples using the metaplectic representation. However, theorem \ref{sym} and theorem \ref{DI} can be easily generalized to the case $\Si=T^*\Omega$. 

Let $h_j(x,\xi)=\phi_j(x)$, where $\phi_j\in C^\infty(\R^n)$. Also let $\tilde J=(\phi_1,\cdots,\phi_k):\R^n\to\R^k$ and let
$$
\Omega_{\lb}:=\tilde J^{-1}(\lb),
$$
for each $\lb\in \tilde J(\R^{n})$. Thus
$$
\hat{\Si}_{\lb}=J^{-1}(\lb)=\Omega_{\lb}\times\R^n
$$
and
$$
\Phi_{t}(x,\xi)=
(x,\xi+t_1\nabla\phi_1(x)+\cdots+t_k\nabla\phi_k(x))\,\,\,\forall t\in\R^k\, ,(x,\xi)\in\R^{2n}.
$$
If $\lb$ is regular, by the implicit function theorem, $\Omega_{\lb}$
is a $n-k$ submanifold of $\R^n$ and $\nabla\phi_j(x)$ is normal at each $x\in\Omega_{\lb}$,
i.e. $\nabla\phi_j(x)\in [(i_\lb)_*(T_x\Omega_{\lb})]^\bot$, where
$i_{\lb}:\Omega_{\lb}\to\R^n$ is the inclusion and $T_x\R^n$ is identified
with $\R^n$ in the usual way.

Let $g$ be the metric on $\Omega_{\lb}$ induced from the standard metric on $\R^n$, i.e. $g_x:T_x\Omega_{\lb}\times T_x\Omega_{\lb}\to\R$ is given by
$$
g_x(v,w):=\langle(i_\lb)_* v,(i_\lb)_* w\rangle.
$$

Also, let $\tilde{g}_x:T_x\Omega_{\lb}\to T_x^*\Omega_{\lb}$ be the natural
isomorphism coming from $g_x$. We identify $T^*_x\Omega_{\lb}$ with the subspace $<\nabla\phi_1(x),\cdots,\nabla\phi_k(x)>^\bot$ using the map $i^\lb_*\circ\tilde{g}_x^{-1}$. Let us denote by $q_x$ be the orthogonal projection of $\R^n$ onto $<\nabla\phi_1(x),\cdots,\nabla\phi_k(x)>^\bot$.

\begin{Theorem}\label{sym}
Let $\phi_j\in C^\infty(\R^n)$ and $h_j(x,\xi)=\phi_j(x)$. If $\lb$ is regular, then:
\begin{equation}\label{symp}
\Si_{\lb}\ni [(x,\xi)]\to (x,q_x(\xi))\in
T^*\Omega_{\lb}
\end{equation}
is a symplectomorphism, where $[(x,\xi)]$ denote the orbit of $(x,\xi)$ by $\Phi$. Here $T^*\Omega_{\lb}$
is endowed with the standard symplectic structure on a cotangent bundle and $T_x^*\Omega_{\lb}$
is identified as above with the $(n-k)$-dimensional subspace $<\nabla\phi_1(x),\cdots,\nabla\phi_k(x)>^\bot$.
\end{Theorem}

\begin{proof}
It is clear that the map (\ref{symp}) is well defined and it is a diffeomorphism. The closed $2$-form
$\Theta_{\lb}$ in $\Si_{\lb}$
is determined by the equation
$$
i_{\lb}^*\Theta=\pi_{\lb}^*\Theta_{\lb},
$$
where $\Theta=\sum_k \de \xi_k\wedge\de x_k$ is the standard closed $2$-form on $\R^{2n}$ and
$\pi_{\lb}:\hat{\Si}_{\lb}\to\Si_{\lb}$ is the quotient map onto the orbit
space (see \cite{Me} or \cite{MW} or \cite{M}). Thus, we only need to check that the standard closed $2$-form $\Theta_{\lb}$ in
$T^*\Omega_{\lb}$ satisfies that equation, after replacing $\pi_{\lb}$ by
$\tilde\pi_{\lb}:\hat\Si_{\lb}\to T^*\Omega_{\lb}$, where
$\tilde\pi_{\lb}((x,\xi))=(x,q_x(\xi))$.

Fix $(x_0,\xi_0)\in\hat{\Si}_{\lb}$, it is enough to prove that there is a basis $\{v_j\}_{j=1}^{2n-k}$ for
$T_{(x_0,\xi_0)}\hat{\Si}_{\lb}$ such that
\begin{equation}\label{redprop}
\Theta_{(x_0,\xi_0)}\left((i_{\lb})_*v_l,(i_{\lb})_*v_m\right)=
(\Theta_{\lb})_{\tilde\pi_{\lb}(x_0,\xi_0)}\left((\tilde\pi_{\lb})_*v_l,
(\tilde\pi_{\lb})_*v_m\right).
\end{equation}
Let us fix a chart $(z_1,\cdots,z_{n-k};U)$ at $x_0\in\Omega_\lb$ and let
$\{\frac{\partial}{\partial z_j}|_{x_0}\}_{j=1}^{n-k}$ the corresponding basis of $T_{x_0}\Omega_\lb$ (recall that $\Omega_\lb$ is a $(n-k)$-submanifold of $\R^n$). Also let $e_j=\frac{\partial}{\partial x_j}|_{(x_0,\xi_0)}, f_j=\frac{\partial}{\partial \xi_j}|_{(x_0,\xi_0)}$ the canonical basis of $T_{(x_0,\xi_0)}\R^{2n}$ and $e_j^*=\de x_j|_{(x_0,\xi_0)},f_j^*=d\xi_j|_{(x_0,\xi_0)}$ the corresponding dual basis of $T^*_{(x_0,\xi_0)}\R^{2n}$. Since $\hat\Si_\lb=\Omega_\lb\times\R^n$, we can consider the basis of $T_{(x_0,\xi_0)}\hat{\Si}_{\lb}$ given by
$$
v_j=
\left\{
  \begin{array}{ll}
\frac{\partial}{\partial z_j}|_{(x_0,\xi_0)}  &\, 1\leq j\leq n-k\\
 \frac{\partial}{\partial \xi_{(j+k-n)}}|_{(x_0,\xi_0)} &\,n-k< j\leq 2n-k.
  \end{array}
\right.
$$

Let us calculate the left hand side of \eqref{redprop}. Let $w_j=(i_\lb)_*(\frac{\partial}{\partial_{ z_j}}|_{x_0})$, then clearly
$$
(i_\lb)_* v_j=
\left\{
  \begin{array}{ll}
\sum_m \langle w_j,e_m\rangle e_m &\,j\leq n-k \\
f_{j+k-n} &\,j>n-k.
\end{array}
\right.
$$
By definition
$$
\Theta_{(x_0,\xi_0)}\left((i_\lb)_* v_j,(i_\lb)_* v_l\right)=\frac{1}{2}\sum_m det\left(
                                                 \begin{array}{cc}
                                            f^*_m((i_\lb)_*v_j) & f^*_m((i_\lb)_*v_l)\\
                                            e^*_m((i_\lb)_*v_j) & e^*_m((i_\lb)_*v_l) \\
                                                 \end{array}
                                               \right).
$$
So, if $j>n-k$ and $l\leq n-k$, we get
$$
\frac{1}{2}\sum_m det\left(
\begin{array}{cc}
f^*_m(f_{j+k-n}) &
f^*_m(\sum_{r=1}^n\langle w_l,e_r\rangle e_r)\\
e^*_m(f_{j+k-n}) &
e^*_m(\sum_{r=1}^n\langle w_l,e_r\rangle e_r) \\
                                                 \end{array}
                                               \right)=
$$
$$
=1/2\langle w_l,e_{j+k-n}\rangle .
$$
The other cases can be obtained in the same way, and we have that
$$
\Theta_{(x_0,\xi_0)}\left((i_\lb)_* v_j,(i_\lb)_*v_l\right)=
\left\{
  \begin{array}{ll}
    -1/2 \langle w_j,e_{l+k-n}\rangle &\,j\leq n-k,\, l> n-k \\
1/2 \langle w_l,e_{j+k-n}\rangle &\, j>n-k,\, l\leq n-k\\
0  &\, \text{otherwise}.
  \end{array}
\right.
$$
For the right hand side of equation \eqref{redprop}, let us denote by $(z,p)$ the elements of $T^*\Omega_\lb$ and by $p_j$ the coordinates on the cotangent part corresponding to the dual basis $\de z_j|_z$. Recall that for any vector field $X$ on $T^*\Omega_\lb$, we have that
$$
X=\sum_{m=1}^{n-k}a_m\frac{\partial}{\partial z_m}|_{\tilde\pi_\lb(x_0,\xi_0)}+b_m\frac{\partial}
{\partial p_m}|_{\tilde\pi_\lb(x_0,\xi_0)}
$$
where $a_m=X(z_m)$ and $b_m=X(p_m)$. Thus, when $X=(\tilde\pi_\lb)_* v_j$, we have that $a_m=v_j(z_m\circ\tilde\pi_\lb)$ and $b_m=v_j(p_m\circ\tilde\pi_\lb)$. Clearly 
$z_m\circ\tilde\pi_\lb(x,\xi)=z_m(x)$. Therefore, if $j\leq n-k$, we have that $(\tilde\pi_\lb)_* v_j =\frac{\partial}{\partial z_j}|_{\tilde\pi^\lb((x_0,\xi_0))}$.

We still need to compute $p_m\circ\tilde\pi_\lb$.
Let $y_l:=(i_\lb)_*\circ\tilde{g}_{x_0}^{-1}(\de z_l|_{x_0})$. Clearly $\{y_l\}_{l=1}^{n-k}$ is a basis of $i^\lb_*(T_{x_0}\Omega_\lb)$, but it is not necessarily orthonormal. However, we have that
$$
\langle y_l,w_m\rangle=g_{x_0}\left(\tilde{g}_{x_0}^{-1}(\de z_l|_{x_0}),\frac{\partial}{\partial z_m}|_{x_0}\right)=\de z_l|_{x_0}
(\frac{\partial}{\partial z_m}|_{x_0})=\delta_l^m.
$$
Notice that $\langle q_{x_0}(\xi),w_j\rangle=\langle \xi,w_j\rangle$. Thus, $\langle q_{x_0}(\xi),w_j\rangle=\left\langle\sum_l\langle \xi,w_l\rangle y_l,w_j\right\rangle$ for every $\xi\in\R^n$, and the non-degeneracy of the inner product implies that
$$
q_{x_0}(\xi)=\sum_l\langle \xi,w_l\rangle y_l.
$$
Therefore
$$
p_m\circ\tilde\pi_\lb(x,\xi)=p_m\left[\tilde{g}_x\circ(i_\lb)_*^{-1}\left(\sum_{l=1}^{n-k}\langle\xi,w_l\rangle y_l\right)\right]=
p_m\left(\sum_{l=1}^{n-k}\langle \xi,w_l\rangle\de z_l\right)=\langle \xi,w_m\rangle.
$$
So, for $j>n-k$, $v_j(p_m\circ\tilde\pi_\lb)=\langle e_{j+k-n},w_m\rangle$. Thus
$$
(\tilde\pi_\lb)_* v_j=
\left\{
  \begin{array}{ll}
\frac{\partial}{\partial z_j}|_{\tilde\pi^\lb((x_0,\xi_0))}  &\, j\leq n-k\\
\sum_m\langle e_{j+k-n},w_m\rangle \frac{\partial}{\partial p_m}|_{\tilde\pi_\lb(x_0,\xi_0)} &\,j> n-k.
  \end{array}
\right.
$$
Finally, computing in the same way that we did for the left hand side of \eqref{redprop}, we obtain that
$$
(\Theta_\lb)_{\tilde\pi_\lb(x_0,\xi_0)}\left((\tilde\pi_\lb)_* v_j,(\tilde\pi_\lb)_* v_l\right)=
\left\{
  \begin{array}{ll}
    -1/2 \langle w_j,e_{l+k-n}\rangle &;\,j\leq n-k,\, l> n-k \\
1/2 \langle w_l,e_{j+k-n}\rangle &;\, j> n-k,\, l\leq n-k\\
0  &;\, \text{otherwise}.
  \end{array}
\right.
$$
\end{proof}

\begin{Corollary}
Let $h_j=h_j'\circ S$, for $j=1,\cdots, k$, where $S$ is a simplectomorphism and $h'_j(x,\xi)=\phi_j(x)$. Then $\Si_\lb\cong T^*\Omega_\lb$. In particular, if $h_j(x,\xi)=\phi_j(\xi)$, then  $\Si_\lb\cong T^*\Omega_\lb$.  
\end{Corollary}
\begin{proof}
The main part is a direct consequence of proposition \ref{cequi}. For the second part it is enough to take $h'_j(x,\xi)=\phi_j(x)$ and $S=\mathfrak F$ is the symplectic matrix.
\end{proof}

%Notice that choosing different simplectomorphisms we can reach a large type of functions for which we can compute $\Si_\lb$ using the previous result. Working with linear simplectomorphisms, we already can reach functions depending on position and momenta. Other important types of symplectomorphisms that might be considered are: lifted diffeomorphisms of $\R^n$ and scattering symplectomorphisms.     

The case $h_j(x,\xi)=\phi_j(\xi)$ is important for physical applications. The proof of our previous theorem can be adapted trivially to this case, but we wanted to emphasize the role played by the symplectic matrix. An important example of an only momentum depending Hamiltonian is the free Hamiltonian $h(x,\xi)=|\xi|^2/2$. Clearly, in the latter case, $\nabla\phi(\xi)=\xi$ and $\Omega_\lb=\mathbb S^{n-1}_{\sqrt{2\lb}}$ is the $n-1$-dimensional sphere of radius $\sqrt{2\lb}$ for each $\lb>0$. %Some elements of the previous theorem already appeared in the classical scattering theory literature. For instance, $\{\xi\}^\bot$ is sometimes called the impact plane and $||q_\xi(x)||=||x-\frac{<x,\xi>}{|\xi|^2}\xi||$ is called the impact parameter.

\subsection{The quantum side.}
The quantum version of proposition \ref{cequi} is also straightforward, it is a direct consequence of the uniqueness of the functional calculus and the uniqueness of the simultaneous diagonalization.
\begin{proposition}\label{qequi}
Let $H'_1,\cdots,H'_k$ a pairwise strongly commuting family of selfadjoint operators on a Hilbert space $\H$ and $T':\H\to\int_{\sp(H_1,\cdots,H_k)}^\oplus \H(\lb)\de\eta(\lb)$ its simultaneous diagonalization. Also, let $\mathcal S$ be an unitary operator on $\H$ and define $H_j=\mathcal S^*H'_j \mathcal S$. Then $H_1,\cdots, H_k$ is a pairwise strongly commuting family of selfadjoint operators and $T:=T'\mathcal S$ is its simultaneous diagonalization.    
\end{proposition}

The canonical quantization of $\R^{2n}$ into $L^2(\R^n)$ is the Weyl quantization, which we denote by $\Op_\hb$. So the quantum counterpart of $h\in C^\infty(\R^{2n})\cap S'(\R^{2n})$ is $H=\Op^\hb(h)$, where $S'(\R^{2n})$ is the space of tempered distributions. 

Notice that if $S$ is a symplectomorphism then $\Op_\hb(h)$ and $\Op^\hb(h\circ S)$ are not necessarily unitarily equivalent (Egorov's theorem states that this is true when $\hb\to 0$ though). However, when $S$ is also linear the unitary equivalence holds and the required unitary operator is given by the so called metaplectic representation $\mu$. More precisely, we know that for any $f\in S'(\R^{2n})$, and any linear symplectomorphism $S$, we have that 
$$
\Op_\hb(f\circ S) = \mu(S^*)\Op_\hb(f)\mu(S^*)^{-1}.
$$
For instance, see theorem 2.15 in \cite{Fol} for details.
\begin{Corollary}\label{met}
If $T'$ is the simultaneous diagonalization of a strongly commuting family of selfadjoint operators of the form $\Op_\hb(h'_1),\cdots,\Op^\hb(h'_k)$ and $S$ is a linear symplectomorphism, then $T:=T'\mu(S^*)^{-1}$ is the simultaneous diagonalization of  $\Op^\hb(h_1),\cdots,\Op^\hb(h_k)$, where $h_j=h_j'\circ S$.
\end{Corollary}

If $h_j(x,\xi)=\phi_j(x)$ as before, its natural quantum counterpart is $H_j:=\phi_j(Q)$ the operator multiplication by $\phi_j$. Clearly,
$\{H_j\}_{j=1}^k$ is a pairwise commuting family of selfadjoint operators. The notation is meant to emphasize that $\phi(Q)$ is the operator given by the simultaneous functional calculus of the family of strongly commuting operators $Q_1,\cdots, Q_n$, where $Q_j u(x)=x_j u(x)$. Notice that $\Op_\hb(h_j)=H_j$. We can present the simultaneous diagonalization for $H_j:=\phi_j(Q)$ in two equivalent ways: the first one seems to be more natural, but the second one is more practical for our purposes and it is based on Corollary \ref{inner}. We shall described both of them.

The simultaneous functional calculus when $H_j:=\phi_j(Q)$ is given by $\varphi(H_1,\cdots,H_k)u(x)=[\varphi\circ\tilde J](x)u(x)$, where $\varphi$ is a Borel function on
$\R^k$ and $u\in L^2(\R^n)$; in particular $sp(H_1,\cdots, H_k)=\overline{\tilde J(\R^n)}=\overline{J(\R^{2n})}$. The latter fact suggests that, in order to find the simultaneous diagonalization of the family $\{H_j\}_{j=1}^k$, we should consider the measurable  field of Hilbert spaces $\{J(\R^{2n})\ni\lb\to L^2(\Omega_\lb,\eta_\lb)\}$ and we should look for a unitary operator of the form $Tu(\lb)=(\rho u)|_{\Omega_\lb}$, where $\eta_\lb$ is the Borel measure defined on $\Omega_\lb$ given by its induced Riemannian structure and $\rho$ is some suitable function on $\R^n$ placed there only to ensure that $T$ becomes unitary.

Notice that $J$ is regular at $(x,\xi)\in\R^{2n}$ iff $\tilde J$ is regular at $x$, i.e. iff the Jacobian
$D\tilde J(x):\R^n\to\R^k$ has rank $k$, or equivalently, 
$\wedge^k [D\tilde J(x)]:\wedge^k\R^n\to\wedge^k\R^k\cong\R$ is not the trivial functional.

Let
\begin{equation*}
\mathcal I=\left\{\lb\in \tilde J(\R^n)/\exists x\in \Omega_\lb,\, \text{such that}\,
\wedge^k D\tilde J(x)=0\right\}.
\end{equation*}
The Morse-Sard Theorem asserts that $\mathcal L^k(\mathcal I)=0$, where $\mathcal L^k$ is the Lebesgue measure.

The key ingredient to find the diagonalization is the so-called coarea formula, which allows us to compute explicitly the volume forms $\gamma_p$ in lemma \ref{mainL}: for any non-negative measurable function $f$ on $\R^n$, the function
$\R^k\ni\lb\to \int_{\Omega_\lb}f(z)\de \eta_\lb(z)$ is measurable, and we have that
 \begin{equation}\label{gcoarea}
\int_{\R^n}f(x)||\wedge^k D\tilde J(x)||\de x=\int_{\R^k}\int_{\Omega_\lb}f(z)\de \eta_\lb(z)\de \lb,
\end{equation}
where the norm involved is the usual operator norm on linear maps. The latter identity also holds for any measurable function $f$ on $\R^n$ if one of the two sides of \eqref{gcoarea} is well defined.

Coarea formula can be found in \cite{Si} 10.6 or it can also be
easily deduced from Theorem 3.2.11 in \cite{Fe}, and it is easy to show that it also holds true if we replace $\R^n$ by an arbitrary Riemannian manifold. %Coarea formula is usually expressed using the $(n-k)$-Hausdorff measure on $\R^n$ restricted to $\Omega_\lb$, but it coincides with the measure coming from the Riemannian structure in our smooth case. When $k=1$, it is easy to check that $||\wedge^k D\tilde J(x)||=||\nabla\phi(x)||$. 

We now look for the required measure $\eta$ on $\overline{J(\R^{2n})}$. In the literature, $\eta$ is called the scalar spectral measure of $H_1,\cdots, H_k$ (or a basic measure in the framework of operator algebras \cite{D}). It is determined by the condition 
$\eta(A)=0$ iff $\eta_u(A)=0$ for every $u\in L^2(\R^n)$, where $\eta_u$ is the measure defined
by $\eta_u(A)=<\chi_A(H_1,\cdots,H_k)u,u>$ and $A\subset\R^k$ is any Borel set. Equivalently, $\eta_u\ll\eta$ for each $u\in\H$ and, if $\mu$ is any other measure such that $\eta_u\ll\mu$ for each $u\in\H$, then $\eta\ll\mu$. So, the Radon-Nikodym's theorem implies that $\eta$ is unique up to equivalence. 

\begin{Lemma}\label{mea}
Assume that $\mathcal L^n\{x\in\R^n/\wedge^k D\tilde J(x)= 0\}=0$, where $\mathcal L^n$ is the Lebesgue measure. The scalar spectral measure is given by $\eta(A)=\mathcal{L}^k(\tilde J(\R^n)\cap A)$. 
\end{Lemma}

The reader might wonder why $\eta$ is not only the restriction of $\mathcal L^k$ to $\tilde J(\R^n)$. The latter will be the case if
$\mathcal L^k(\overline{\tilde J(\R^n)}\backslash \tilde J(\R^n))=0$. However, such equality is true for $k=1$, but it fails for $k>1$ in general. When $k=2$, an example can be constructed as follows: Let $C$ be a Jordan curve in the plane with positive area, thus its interior region is open and simply connected and its boundary is $C$. The Weierstrass uniformization Theorem implies that the interior of $C$ is biholomorphic with the open disc. Therefore, it is enough to take
$\tilde J$ to be the composition of such holomorphic function with a smooth map from $\R^n$ onto the open disc.

\begin{proof}
Note that, for every Borel set $A\subset\R^k$,
$$
<\chi_A(H_1,\cdots,H_k)u,u>=\int_{\R^n}\chi_A(\tilde J(x))|u(x)|^2\de x=
$$
$$
\int_{\R^k}\int_{\Omega_\lb}\chi_A(\tilde J(z))||\wedge^k D\tilde J(z)||^{-1}|u(z)|^2\de\eta_\lb(z)\de\lb=
\int_{A\cap \tilde J(\R^n)}\int_{\Omega_\lb}||\wedge^k D\tilde J(z)||^{-1}|u(z)|^2\de\eta_\lb(z)\de\lb,
$$
then $\eta_u(A)=0$ for every $u\in L^2(\R^n)$ iff $\mathcal L^k(\tilde J(\R^n)\cap A)=0$.
\end{proof}

The following result, even if it follows almost directly from coarea formula, it is not stated elsewhere to the state of knowledge of the authors (except for the well known case $k=1$ and $\phi(x)=\frac{x^2}{2}$, or in general for $\phi(x)=h_0(||x||)$, lemma 3.6 in \cite{T}).
Essentially, the following theorem states that we should take $\rho(x)=||\wedge^k D\tilde J(x)||^{-1/2}$.

\begin{Theorem}\label{DI}

Let $\mathcal M=\{x\in\R^n:\wedge^k D\tilde J(x)\neq 0\}$ and define $\rho:\mathcal M\to\R$ by
$\rho(x)=||\wedge^k D\tilde J(x)||^{-1/2}$. Then the map
$$
T_0: L^2(\mathcal M)\to \int^\bigoplus_{\overline{\tilde J(\R^n)}} L^2(\Omega_\lb,\de\eta_\lb)\de\eta(\lb),
$$
given by
$$
[T_0 u(\lb)](z):=\left\{
  \begin{array}{ll}
    \rho(z)u(z)\,\,\, &\text{if } \lb\in \tilde J(\R^n)\backslash \mathcal I\\
0  & \text{otherwise}.
  \end{array}
\right.
$$
is unitary. If $\mathcal L^n(\mathcal M^c)=0$ then $T_0$ simultaneously diagonalize the family $\{H_j:=\phi_j(Q)\}_{j=1}^k$.
\end{Theorem}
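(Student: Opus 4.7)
The plan is to verify, in order, (a) that $T_x$ is well-defined, i.e.\ $T_x u$ is a measurable vector field; (b) that it is isometric; (c) that it is surjective; and (d) that it intertwines the functional calculus of $\phi_1(Q),\dots,\phi_k(Q)$ with multiplication by the coordinates on $\overline{\tilde J(\R^n)}$. The key tool throughout is the coarea formula \eqref{gcoarea}, together with the CFHS structure established in Theorem \ref{examplecfhs} and the identification $\eta(A)=m(A\cap \tilde J(\R^n))$.

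First I would observe that, since $\mathcal I$ is Lebesgue-null and $\overline{\tilde J(\R^n)}\setminus \tilde J(\R^n)$ is also $\eta$-null (by definition of $\eta$), we may restrict attention to $\lambda\in \tilde J(\R^n)\setminus\mathcal I$. For $u\in C_c^\infty(\mathcal M)$ (extended by zero outside $\mathcal M$), the section $\lambda\mapsto (\rho^{1/2}u)|_{\tilde\Sigma_\lambda}$ is a continuous vector field in the sense of Theorem \ref{examplecfhs}: indeed $\rho$ is smooth and positive on $\mathcal M$, and Theorem \ref{examplecfhs} gives continuity of $\lambda\mapsto\langle (\rho^{1/2}u)|_{\tilde\Sigma_\lambda},h|_{\tilde\Sigma_\lambda}\rangle_\lambda$ and of the norm for any $h\in C_c^\infty(\R^n)$. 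Hence $T_x u$ is in particular measurable, and this will extend to all of $L^2(\mathcal M)$ by density once isometry is proved.

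Next, the isometry step is essentially a one-line application of coarea. For $u\in C_c^\infty(\mathcal M)$, apply \eqref{gcoarea} to $f(x):=\rho(x)|u(x)|^2\chi_{\mathcal M}(x)$. Since $\rho(x)\|\wedge^k D\tilde J(x)\|=1$ on $\mathcal M$, the left-hand side equals $\|u\|^2_{L^2(\mathcal M)}$, while the right-hand side is
\[
\int_{\R^k}\int_{\tilde\Sigma_\lambda}\rho(z)|u(z)|^2\,\de\eta_\lambda(z)\,\de\lambda
=\int_{\overline{\tilde J(\R^n)}}\|T_x u(\lambda)\|_\lambda^2\,\de\eta(\lambda),
\]
using $\eta(A)=m(A\cap \tilde J(\R^n))$ and that the integrand vanishes off $\tilde J(\R^n)\setminus\mathcal I$ by Morse–Sard. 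Thus $T_x$ extends by density to an isometry of $L^2(\mathcal M)$ into the direct integral; if $\mathcal M^c$ is Lebesgue-null, $L^2(\mathcal M)=L^2(\R^n)$.

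For surjectivity — the step I expect to be the main obstacle — the cleanest route is to show the image is dense, since a densely defined isometry with dense range is unitary. By condition (iii) of Definition \ref{mfhs} (built into the CFHS of Theorem \ref{examplecfhs}) there is a countable family of continuous vector fields total in each fibre, and the bump construction at the end of Theorem \ref{examplecfhs} shows that $\{\rho^{1/2}h|_{\tilde\Sigma_\lambda}:h\in C_c^\infty(\mathcal M)\}$ is total in $L^2(\tilde\Sigma_\lambda,\eta_\lambda)$ for each regular $\lambda$. A standard direct-integral argument (multiplying by $L^\infty(\eta)$ step functions supported on small neighbourhoods in $\tilde J(\R^n)\setminus\mathcal I$) then shows these sections generate a dense subspace of the direct integral; the measure-zero set $\mathcal I$ and the cut-off to $\mathcal M$ cause no trouble since everything off these sets has been arranged to vanish. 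Finally, the intertwining property is immediate: for bounded Borel $f$, $f(\phi_1(Q),\dots,\phi_k(Q))u(x)=f(\tilde J(x))u(x)$, and on $\tilde\Sigma_\lambda$ we have $\tilde J(z)=\lambda$, so $[T_x f(H_1,\dots,H_k)u](\lambda)=f(\lambda)[T_xu](\lambda)$, which is exactly condition (3) of Theorem \ref{diag}.
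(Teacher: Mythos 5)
Your proof is correct and follows essentially the same route the paper takes (coarea formula plus Morse--Sard for unitarity, then a direct check of the intertwining identity), merely supplying the measurability and surjectivity details that the paper compresses into a single sentence. Nothing to add.
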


\begin{proof}
%\left\{
 % \begin{array}{ll}
  %  |\nabla(z)|^{-1/2}(\mathcal{F}u)(z)\,\,\, &\text{if} \lb\in\phi(\R^n)\backslash I\\
%0  & \text{otherwise}.
 % \end{array}
%\right.
It follows from Sard's lemma and coarea formula that $T_0$ is well defined and unitary. Moreover, for each function $\varphi$
Borel and bounded on $\R^k$, $z\in\Omega_\lb$ and $u\in L^2(\R^n)$, we have that
$$
[T_0 \varphi(H_1,\cdots,H_k)u(\lb)](z)=\rho(z)[\varphi(H_1,\cdots,H_k)u](z)=
\rho(z)f(\tilde J(z))u(z)= \varphi(\lb)[T_0u(\lb)](z).
$$
\end{proof}
\begin{Remark}\label{posdep}

{\rm
This theorem would also follows under milder conditions over the family $\{\phi_j\}$, for instance it is enough to
assume that $\tilde J$ is Lipschitz on any bounded set.}
%indeed, we would only need to ask that $\tilde\Si_\lb$ is a $(n-1)$-Hausdorff measurable
%set and the existence of a covering of $\R^n$ by pairwise disjoint bounded Lebesgue measurable sets such that on each of
%its members $\phi$ is Lipschitz (be more precise).
\end{Remark}

If $h_j(x,\xi)=\phi_j(\xi)$, then its natural quantum counterpart is $H_j:=\phi_j(P):=\mathcal F^{-1}\phi_j(Q)\mathcal F$, where $\mathcal F$ is the Fourier transform. The notation is meant to emphasize that $\phi(P)$ is the operator given by the simultaneous functional calculus of the family of strongly commuting operators $P_1,\cdots, P_n$, where $P_j u(x)=-i\hb\partial_j u(x)$. Notice that $\Op^\hb(h_j)=H_j$. Proposition \ref{qequi} implies that $T_0\mathcal F$ is the diagonalization of $H_1,\cdots, H_k$ in this case. %In particular the field of Hilbert spaces is $\{\overline{J(\R^{2n})}\ni\lb\to L^2(\Omega_\lb)\}$. 
%For example, consider the important case $k=1$ and $h(x,\xi)=||\xi|^2/2$. Clearly  $\mathcal I=\{0\}$ and $\Omega_\lb=\mathbb S^{n-1}_{\sqrt{2\lb}}$ is the $n-1$-dimensional sphere of radius $\sqrt{2\lb}$ for each $\lb>0$. Also, we get $\phi(P)=-\frac{1}{2}\Delta$ is the Laplacian on $\R^n$. The diagonalization of this operator is well known, but usually it is presented using as a constant fiber the Hilbert space $L^2(\mathbb{S}^{n-1})$. %However in the definition of the corresponding $T$ there is, at each $\lb>0$, the dilation factor $\sqrt{2\lb}$, which gives a diffeomorphism between the $1$-sphere and the $\sqrt{2\lb}$-sphere.

We can combine Theorem \ref{DI} with the metaplectic representation to compute diagonalizations.
\begin{Corollary}
Let $h'_j(x,\xi)=\phi_j(x)$ and $S$ be a linear symplectomorphism. Define $h_j=h'_j\circ S$ and assume that $\mathcal L(\mathcal M^c)=0$. Then $\Op^\hb(h_1),\cdots,\Op^\hb(h_k)$ is a strongly commuting family of selfadjoint operators and $T=T_0\mu(S^*)^{-1}$ is its simultaneous diagonalization.   
\end{Corollary}
For example, we recover the example $H_j=\phi_j(P)$ described above after noticing that $\mu(\mathfrak F)=\mathcal F$.
 
As we mention above, there is another way to present theorem \ref{DI} much more useful for our purposes. Motivated by corollary \ref{inner}, consider the family of sesquilinear forms defined on $L^2(\R^n)$ given by 
$$
(u,v)_\lb:=\int_{\hat\Si_\lb}W_{u,v}\gamma_\lb.
$$ 
We will show that $(\cdot,\cdot)_\lb$ is nonnegative; but it might fail to be an inner product. However, we can still construct Hilbert spaces canonically: consider the set $K_\lb=\{u\in L^2(\R^n):(u,u)_\lb=0\}$, the Cauchy-Schwartz inequality implies that $K_\lb$ is a subspace and $(\cdot,\cdot)_\lb$ defines an inner product on the quotient space $L^2(\R^n)/K_\lb$. We denote by $\H_\lb$ the Hilbert space obtained by completing $L^2(\R^n)/K_\lb$ and we denote by $p_\lb:L^2(\R^n)\to \H_\lb$ the canonical projection (with dense range). We shall prove that we recover the diagonalization of $H_1,\cdots, H_k$ taking $T_0=\int^\oplus p_\lb\eta(\lb)$, but in order to do it we need to compute first the volume form $\gamma_\lb$, and for the next subsection we also compute the volume form  $\alpha_{[\si]}$.

\begin{Lemma}
The volume form $\gamma_\lb$ on $\hat\Si_\lb=\Omega_\lb\times\R^n_\xi$ coincides with $\rho^2\eta_\lb\wedge\de\xi$. The volume form $\alpha_{[x,\xi]}$ on the orbit $[x,\xi]$ coincides with  $\rho^2\mu_{(x,\xi)}$, where $\mu_{(x,\xi)}$ is the volume form given by the Riemannian structure induced on $[x,\xi]=\xi+\langle\nabla\phi_j(x)/j=1,\cdots k\rangle$. 
\end{Lemma}
\begin{proof}
First notice that there is $\varphi\in C^\infty(\hat\Si_\lb)$ such that $\rho^2\eta_\lb\wedge\de\xi=\varphi\gamma_\lb$. Combining coarea formula and equation \eqref{gamma}, we have that
$$
\int_{\R^{2n}}f\de\xi\de x=\int_{\tilde J(\R^n)}\left(\int_{\tilde\Si_\lb} f \rho^2\eta_\lb\wedge\de\xi\right)\de\lb=
$$
$$
=\int_{\tilde J(\R^n)}\left(\int_{\tilde\Si_\lb} f \varphi\gamma_\lb\right)\de\lb=\int_{\R^{2n}}f\varphi\de\xi\de x.
$$
Therefore $\varphi=1$. For the second part, fix $x\in\Omega_\lb$ and let $\pi_x:T_x\R^n\to T^*_x\Omega_\lb$ the orthogonal projection. Coarea formula implies that, for each $ f\in L^1(T^*_x\R^n)$, we have that
$$
\int_{T_x^*\R^n}f=\int_{T^*_x\Omega_\lb} \left[\int_{\pi_x^{-1}(\xi)}f\mu_{(x,\xi)}\right]\de(\xi)
$$
where $\mu_{(x,\xi)}$ is the Riemannian volume form on  $\pi_x^{-1}(\xi)=[x,\xi]$. Using equation \eqref{zeta}, we have that
$$
\int_{\Omega_\lb}\left(\int_{T^*_x\Omega_\lb} \left[\int_{\pi_x^{-1}(\xi)}f\alpha_{[x,\xi]}\right]\de\xi\right)\eta_\lb(x)=\int_{\hat\Si_\lb}f\gamma_\lb=\int_{\Omega_\lb}\left(\int_{T^*_x\Omega_\lb} \left[\int_{\pi_x^{-1}(\xi)}f\rho^2\mu_{(x,\xi)}\right]\de\xi\right)\eta_\lb(x). 
$$
Therefore, using the same argument we used to compute $\gamma_\lb$, we obtain that $\alpha_{[x,\xi]}=\rho^2\mu_{[x,\xi]}$.
\end{proof}
\begin{proposition}
Let $h_j(x,\xi)=\phi_j(x)$ with $j=1,\cdots,k<n$, and assume that $\mathcal L^n\{x\in\R^n:\wedge^k D\tilde J(x)= 0\}=0$.
If $\lb\in \tilde J(\R^n)/\mathcal I$, then the sesquilinear form $(\cdot,\cdot)_\lb$ is nonnegative and the Hilbert space $\H_\lb$ coincides with $L^2(\Omega_\lb,\rho^2\eta_\lb)$. Moreover, the map $T_0:L^2(\R^n)\to \int^\bigoplus_{\overline{\tilde J(\R^n)}}\H_\lb\de\eta(\lb)$, given by $T_0u(\lb)=p_\lb(u)$, is the simultaneous diagonalization of the family of strongly commuting selfadjoint operators $H_j:=\phi_j(Q)$, with $j=1,\cdots,k$.
\end{proposition}
     
 \begin{proof}
It is well know that $\int_{\R^n} W_{u,v}(x,\xi)\de\xi=u(x)\overline{v(x)}$ (proposition 1.96 in \cite{Fol}). Thus,
$$
(u,v)_\lb=\int_{\Omega_\lb}u(x)\overline{v(x)}\rho^2(x)\eta_\lb(x).
$$
In particular, $(\cdot,\cdot)_\lb$ is nonnegative and $(u,u)_\lb=0$ iff $u|_{\Omega_\lb}=0$.
\end{proof}   
\subsection{The field of Wigner transforms $W^\lb$.}
We have already compute the symplectic manifolds $\Si_\lb$ and the Hilbert spaces $\H(\lb)$, so in order to construct the decomposable Weyl quantization, we must define a field of Wigner transforms. In view of theorem \ref{char}, the natural candidate would be
$$
\tilde W^\lb_{u,v}([\si]):=\int_{[\si]}W_{u,v}\alpha_{[\si]}.
$$
Notice that, by construction we have that
$$
\int_{\Si_\lb}\tilde W^\lb_{u,v}\zeta_\lb=(u,v)_\lb.
$$
However, $\tilde W^\lb$ defines a sesquilinear form on $L^2(\R^n)$, but not necessarily in $\H(\lb)$. In order to do so, since $\H(\lb)$ is a quotient space (up to completion), we would need to show that $\tilde W^\lb_{u,u}=0$ whenever $(u,u)_\lb=0$. However, we will show that such condition does not hold unless $\Omega_\lb$ is flat.

Recall that the Wigner transform $W$ of Weyl quantization is given by
$$
W_{u,v}(x,\xi)=\int_{\R^n}e^{-i\xi\cdot p}u(x+p/2)\overline{v}(x-p/2)\de p.
$$

\begin{proposition}\label{LW}
%The volume form $\alpha_{[\si]}$ coincides
If $x\in\Omega_\lb$ and $\xi\in T_x^*\Omega_\lb\cong\nabla\phi^\perp$, then
$$
\tilde W^\lb_{u,v}(x,\xi)=\rho^2(x)\int_{T_x\Omega_\lb}e^{-i\xi\cdot p}u(x+p/2)\overline{v}(x-p/2)\de p
$$
\end{proposition} 
\begin{proof}

$$
\int_{[x,\xi]}W_{u,v}\alpha_{[x,\xi]}=\rho^2(x)\int_{\pi_x^{-1}(\xi)}\left(\int_{T_x\Omega_\lb} \left[\int_{\pi_x^{-1}(p)}e^{i\zeta\cdot q}u(x+q/2)\overline{v}(x-q/2)\mu_{(x,p)}(q)\right]\de p\right)\mu_{(x,\xi)}(\zeta)=
$$
$$
=\rho^2(x)\int_{T_x\Omega_\lb}\left( \int_{\pi_x^{-1}(\xi)}\left[\int_{\pi_x^{-1}(p)}e^{i\zeta\cdot q}u(x+q/2)\overline{v}(x-q/2)\mu_{(x,p)}(q)\right]\mu_{(x,\xi)}(\zeta)\right)\de p=
$$
$$
=\rho^2(x)\int_{T_x\Omega_\lb}\left( \int_{\pi_x^{-1}(0)}\left[\int_{\pi_x^{-1}(0)}e^{i(\xi+\zeta)\cdot (p+q)}u(x+p/2+q/2)\overline{v}(x-p/2-q/2)\mu_{(x,0)}(q)\right]\mu_{x,0}(\zeta)\right)\de p=
$$
$$
=\rho^2(x)\int_{T_x\Omega_\lb}e^{i\xi\cdot p}u(x+p/2)\overline{v}(x-p/2)\de p
$$
\end{proof}

In particular, $\tilde W_\lb(u,u)$ depends on the restriction of $u$ to each $T_x\Omega_\lb$, but not on the restriction of $u$ to $\Omega_\lb$. In other words,  $u_\lb=u|_{\Omega_\lb}=0$ does not implies $\tilde W_\lb(u,u)=0$, unless $\Omega_\lb$ is flat. Notice that $\Omega_\lb$ is flat exactly when each $\phi_j$ is linear and in such case $\tilde W_\lb$ is the canonical Wigner transform. Using the metaplectic representation, we obtain the following consequence.
\begin{Corollary}\label{CLW}
If $h_1,\cdots h_k$ is a family of linear functions on $\R^{2n}$ such that $\{h_i,h_j\}=0$ for any $i,j\leq k<n$, then quantization and reduction commutes on every constant of motion.
\end{Corollary}
\begin{proof}
Lemma 2.17 in \cite{Fol} implies that there is a linear symplectomorphism $S$ such that $h_j=q_j\circ S$, where $q_j(x)=x_j$. So, our result follows from the previous proposition, corollary \ref{met} and theorem \ref{char}.
\end{proof} 

Even if $\tilde W_\lb$ does not defines a Wigner transform, we can perturb it to do so. Indeed we can define 
\begin{equation}\label{wign}
W^\lb_{u_\lb,v_\lb}([x,\xi]):=\rho^2(x)\int_{T_x\Omega_\lb}e^{-i\xi\cdot p}u(\text{exp}_x (p/2))\overline{v}(\text{exp}_x (-p/2))G_\lb(x,p)\de p,
\end{equation}
where $\text{exp}_x:T_x\Omega_\lb\to\Omega_\lb$ is the exponential map and $G_\lb(x,p)$ is a suitable weight function ensuring that $W^\lb$ is well defined. In order to guarantee that $\int_{T_x^*\Omega_\lb} W^\lb_{u,v}(x,\xi)\de\xi= \rho^2(x) u(x)\overline{v}(x)$, we only need to require that $G_\lb(x,0)=1$, for each $x\in\Omega_\lb$. Therefore, under the latter condition, we have that
$$
\int_{\Si_\lb} W^\lb_{u_\lb,v_\lb}\zeta_\lb=\langle u_\lb,v_\lb\rangle.
$$
%For simplicity, in what follows we shall perturb the Riemannian structure on $\Omega_\lb$ in such way that the emerging volume form is $\rho\eta_\lb$. 
As we mentioned in the beginning of this section, Landsman proposed a quantization with Wigner transform of the previous form ~\cite{La} II.3 (or his article~\cite{LaYM}). Let $\nu:T\Omega_\lb\to\Omega_\lb\times\Omega_\lb$ given by 
$$
\nu(x,p)=(\text{exp}_x(p/2),\text{exp}_x(-p/2)).
$$
Then, $\nu$ defines a diffeomorphism on a neighbourhood $V$ of the zero section (identified with $\Omega_\lb$) . Let 
$\kappa\in C^\infty(T\Omega_\lb)$ such that: 
\begin{enumerate}
\item[a)] $\kappa=1$ in a neighbourhood $V_0$, with $\Omega_\lb\subseteq V_0\subseteq V$.
\item[b)] $\kappa$ has support on $V$.
\item[c)] $\kappa(x,p)=\kappa(x,-p)$. 
\end{enumerate}
Then, Landsman's Wigner transform is obtained by taking $G_\lb=\rho^{-2}\kappa D(\nu)$. The corresponding Weyl-Landsman quantization (definition 3.4.4.~\cite{La}) has a number of interesting properties; for instance, it defines a strict deformation quantization on certain Poisson subalgebra $C^\infty_{PW}(T^*\Omega_\lb)$  (theorem 3.5.1~\cite{La}).

\begin{Remark}
{\rm
Let $\Op^d$ be the decomposable Weyl quantization defined by the field of Wigner transforms given by Eq.\eqref{wign}. Even if we do not have CQR on every constant of motion, we do expect to have SCQR, because $\tilde W^\lb$ can be interpreted as a sort of linearization of $W^\lb$. However, we have not developed yet the techniques required to prove such claim, and we shall leave this problem for a future work. 
}
\end{Remark}
 
Despite $\Op^d$ does not coincide with $\Op$ on {\it every} constant of motion, we still can have CQR on some important cases, and we shall finish this section showing some examples of this. It is clear that if 
\begin{equation}\label{crit}
\int_{\Si_\lb} f_\lb\tilde W^\lb(u_\lb,v_\lb)\zeta_\lb=\int_{\Si_\lb} f_\lb W^\lb(u,v)\zeta_\lb
\end{equation}
for each $u,v\in L^2(\R^n)$, then quantization and reduction commutes on $f$. For instance, if $f(x,\xi)=\varphi(x)$, where $\varphi\in C^\infty(\R^n)$, then $f$ is a constant of motion. Moreover,

$$
\int_{\Si_\lb} f_\lb\tilde W^\lb(u,v)\zeta_\lb=\int_{\Omega_\lb}\varphi(x)\int_{T_x^*\Omega_\lb} W^\lb_{u,v}(x,\xi)\de\xi=\int_{\Si_\lb} f_\lb\tilde W^\lb(u,v)\zeta_\lb.
$$
Therefore $\Op(f)=\Op^d(f)=\varphi(Q)$. In particular, $\Op(a\circ J)=\Op^d(a\circ J)=a(H_1,\cdots,H_k)$, for any $a\in C^\infty(\R^k)$. 

We shall also consider observables of the form $f=J_Y$, where $Y$ is a vector field on a manifold $\Omega$ and $J_Y\in C^\infty(T^*\Omega)$ is given by $J_Y(x,\xi)=\langle Y(x),\xi\rangle$. Such classical observables play an important role in Hamiltonian mechanics and they have many interesting properties. For instance, the Hamiltonian flow of $J_Y$ coincides with lift of the flow of $Y$ from $\Omega$ to $T^*\Omega$. Moreover, the flow of $Y$ induces an strongly continuous unitary one parameter group on $L^2(\Omega)$, and Weyl-Landsman quantization $\Op^\Omega$ (in particular, Weyl quantization, when $\Omega=\R^n$) maps $J_Y$ into the corresponding infinitesimal generator (propositions 3.6.1. and 3.6.2. in \cite{La}). More precisely, we have that
$$
\Op^\Omega(J_Y)=-i(Y+\frac{1}{2}\text{div}(Y)),
$$ 
where $\text{div}(Y)$ is the divergence of $Y$.
We shall prove that, if we define $\Op^d$ using Landsman's Wigner transform on each fiber, then quantization and reduction commute on $J_Y$, whenever $Y$ is tangent to each $\Omega_\lb$. 
\begin{Lemma}\label{div}
Let $Y$ be a vector field on $\R^n$ tangent to each $\Omega_\lb$ and denote by $Y^\lb$ the restriction of $Y$ to $\Omega_\lb$, for each $\lb\in\tilde J(\R^n)\backslash\mathcal I$. Then, for
each $z\in\Omega_{\lb}$,
$$
div(Y)(z)=div(Y^{\lb})(z)+\rho^{-2} (z)Y(\rho^2)(z)
=div(Y^{\lb})(z)+2\rho^{-1}(z)Y(\rho)(z).
$$
\end{Lemma}

\begin{proof}
The last equality follows from the chain rule. Let us denote by $\omega$ the volume form of $\R^n$. By lemma \ref{mainL}, there is a $n-k$ volume form $\mu$ such that $\omega=\mu\wedge \de\phi_1\wedge\cdots \wedge\de\phi_k$. Notice that coarea formula implies that the restriction of such $\mu$ to any $\Omega_\lb$ coincides with $\rho^2\eta_\lb$. Moreover,
$$
i_Y(\omega)=i_Y(\mu)\wedge \de\phi_1\wedge\cdots \wedge\phi_k + \mu \wedge i_Y(\de\phi_1\wedge\cdots \wedge\phi_k)=i_Y(\mu)\wedge \de\phi_1\wedge\cdots \wedge\phi_k. 
$$
Therefore, 
$$
\text{div}(Y)\mu\wedge \de\phi_1\wedge\cdots \wedge\phi_k=\text{div}(Y)\omega=d[i_Y(\mu)]\wedge \de\phi_1\wedge\cdots \wedge\phi_k.
$$
Restricting to each $\Omega_\lb$, we obtain that
$$
\text{div}(Y)\rho^2\eta_\lb=d[i_{Y^\lb}(\rho^2\eta_\lb)]=\text{div}(\rho^2 Y^\lb)\eta_\lb=[Y(\rho^2)+\rho^2\text{div}(Y^\lb)]\eta_\lb.
$$
\end{proof}
  
\begin{proposition}\label{ecqr}
Let $Y$ be a complete vector field on $\R^n$ tangent to each $\Omega_\lb$. Also, let $\Op^d$ be the decomposable Weyl quantization defined using Landsman's Wigner transform on each fiber. Quantization and reduction commutes on $J_Y$.
\end{proposition}
\begin{proof}
We know that the right hand side of \eqref{crit} coincides with 
$$
\langle -i(Y^\lb+\frac{1}{2}\text{div}(Y^\lb))\rho u,\rho v\rangle=-i\langle \rho(Y+\rho^{-1}Y(\rho)+\frac{1}{2}\text{div}(Y^\lb)) u,\rho v\rangle=-i\langle \rho^2(Y+\frac{1}{2}\text{div}(Y)) u, v\rangle
$$
Since $ -i(Y^\lb+\frac{1}{2}\text{div}(Y^\lb))$ is a selfadjoint operator, we also have that
$$
\langle -i(Y^\lb+\frac{1}{2}\text{div}(Y^\lb))\rho u,\rho v\rangle=i\langle  u, \rho^2(Y+\frac{1}{2}\text{div}(Y)) v\rangle
$$
On the left hand side of \eqref{crit}, using integration by parts and the Fourier inversion formula, we have that
$$
\int_{\Si_\lb}f_\lb \tilde W^\lb_{u,v}\zeta_\lb=\int_{\hat\Si_\lb}f \tilde W_{u,v}\gamma_\lb=\int_{\Omega_\lb}\rho^2(x)\int_{\R^n}\sum_{j=1}^n Y_j(x)\xi_j\int_{\R^n}e^{-i\xi\cdot p}u(x+\frac{p}{2})\overline{v}(x-\frac{p}{2})\de p\de\xi\de\eta_\lb(x)=
$$
$$
i\int_{\Omega_\lb}\rho^2(x)\sum_{j=1}^n Y_j(x)\int_{\R^n}\int_{\R^n}\frac{\partial}{\partial p_j}[e^{-i\xi\cdot p}]u(x+\frac{p}{2})\overline{v}(x-\frac{p}{2})\de p\de\xi\de\eta_\lb(x)=
$$
$$
-\frac{i}{2}\int_{\Omega_\lb}\rho^2(x)\sum_{j=1}^n Y_j(x)(\frac{\partial u}{\partial x_j}(x)\overline{v}(x)-u(x)\frac{\partial \overline{v}}{\partial x_j}(x))\de\eta_\lb(x)=
$$
$$
-\frac{i}{2}\int_{\Omega_\lb}\rho^2(x) Yu(x)\overline{v}(x)-u(x)Y\overline{v}(x)\de\eta_\lb(x)=-i(\langle \rho^2(Y+\frac{1}{2}\text{div}(Y)) u, v\rangle.
$$

%Let $z\in\R^n$ regular (respect to $\tilde J$). Then, using the same notation than above,we obtain
%$$
%\left(T_0^*\left[\int_{\tilde J(\R^n)}^{\oplus}-i\hb(Y^\lb+\frac{1}{2}\text{div} Y^\lb)\de\lambda\right]T_0 u\right)(z)=
%$$
%$$
%-\rho(z)^{-1}i\hb(Y^{\tilde J(z)}+\frac{1}{2}\text{div} Y^{\tilde J(z)})
%(\rho u|_{\Omega_{J(z)}})(z)
%=
%$$
%$$
%=-i\hb Yu(z)-i\hb\rho(z)^{-1}Y(\rho)(z)u(z)-\frac{i\hb}{2}(\text{div} Y(z)+2\rho(z)^{-1}Y(\rho)(z)u(z))=
%$$
%$$
%\left[-i\hb(Y+\frac{1}{2}\text{div} Y)u\right](z).
%$$
\end{proof}
The criteria given by equation \eqref{crit} is not be a necessary condition to obtain CQR for a given contant of motion $f$. We necessary and sufficient condition is the identity obtained if we integrate both sides of \eqref{crit}. 

An interesting example on which we can apply our construction is the following. Take $k=1$ and $\phi(x)=||x||^2$. Then $\Omega_\lb=\mathbb{S}^{n-1}_{\sqrt{\lb}}$ is the $(n-1)$-dimensional sphere of radius $\sqrt{\lb}$. The canonical action of the orthogonal group $O(n)$ on $\R^n$ induces a representation of the Lie algebra $\mathfrak{so}(n)$ on the Lie algebra of vector fields on $\R^n$, and the vector fields in the range of that representation are complete and tangent to each sphere (so, we can apply the previous proposition). We can also construct many more constants of motions using the latter action. For instance, the pull back of the canonically induced moment map $M:\R^{2n}\to \mathfrak{so}(n)^*$ allows to represent $C^\infty(\mathfrak{so}(n)^*)$ as classical constants of motion. Moreover, the induced unitary representation of $O(n)$ on $L^2(\R^n)$ allows to represent as quantum constants of motion the universal enveloping algebra of $ U(\mathfrak{so}(n))$ (up to requiring selfadjointness of the corresponding operators). Note that, we can also embed $ U(\mathfrak{so}(n))$ in $C^\infty(\mathfrak{so}(n)^*)$ (and therefore in $C^\infty(\R^{2n})$).

Finally, notice that for the latter example, it can be shown that Weyl quantization preserves constants of motion (for instance, using the metaplectic representation). It would be interesting to compare $\Op$ and $\Op^d$ on $M^*[C^\infty(\mathfrak{so}(n)^*)]$, or at least on $ U(\mathfrak{so}(n))$ (after embed it). We expect that both quantizations coincides with the induced representation of $\mathcal U(\mathfrak{so}(n))$ as operators on $L^2(\R^n)$ discussed above. We shall leave this problem open for a future work.

\section{Other forms of quantization and related problems.}\label{OQ}

In this section we are going to explain briefly how our problem has been
approached from the perspective of geometric quantization, what are the differences with our approach, and what could be done within the context of deformation quantization. 

Let us consider a more general framework than the one used so far for the classical side of our problem: Let $\Si$ be a symplectic manifold, $G$ a Lie group acting by symplectomorphisms on $\Si$ and $J:\Si\to \mathfrak g^*$ a covariant moment map, where $\mathfrak{g}$ is the Lie algebra of $G$ and $\mathfrak{g}^*$ is its dual. For simplicity, we will assume that $J$ is regular and the action is free and proper on each $\hat\Si_\lb:=J^{-1}(\lb)$, where $\lb$ is a coadjoint orbit on $\mathfrak{g}^*$. Thus, the orbit space $\Si/G$ is a Poisson manifold and the quotient map $\pi:\Si\to \Si/G$ is a Poisson map. Morever, the symplectic leaves in $\Si/G$ coincide with orbit spaces $\Si_\lb:=J^{-1}(\lb)/G$ (with canonical symplectic structure given by Marsden-Weinstein-Meyer reduction). In such framework, a constant of motion is a function $a\in C^\infty(\Si)$ such that $a(g\cdot \si)=a(\si)$ for every $\si\in\Si$ and $g\in G$. The algebra of all constants of motion $\A$ can be identified with $C^\infty(\Si/G)$. 

Note that the case $G=\R^k$, can only be achieved 
with $J=(h_1,\cdots,h_k)$, where each $h_j\in C^\infty(\Si)$ and they pairwise Poisson commute. See further details in the introduction.

Geometric quantization's goal is to construct the Hilbert space and the maps required in the definition of canonical quantization using geometric data only: a Hermite line bundle over the symplectic manifold $\Si$, a suitable connection on such bundle, a polarization and the metaplectic correction \cite{S,K,W}. Such ingredients are called a prequantization data. 

It turns out that, when the prequantization data is also $G$-invariant, it simultaneously induces a unitary group representation $U$ of $G$ on the emerging Hilbert space $\H$ and a prequantization data on each $\Si_\lb$.
In particular, we can use the geometric quantization mechanism once again and build a family of Hilbert spaces $\H_\lb$. Contrary, in our context we did not construct the Hilbert spaces $\H(\lb)$, they were given abstractly within the framework of spectral theory and we justified why they should be the quantum counterpart of $\Si_\lb$ (see the introduction).

A natural question arises: Is there a procedure to obtain $\H_\lb$ from $\H$ similarly to the way how $\Si_\lb$ is constructed from $\Si$? . A positive answer was given by Guillemin and Sternberg in the celebrated article \cite{GS}. They proved that, if $G$ and $\Si$ are compact then $\H_0$ coincides with $\H^0:=\{\psi\in\H/U_g(\psi)=\psi,\forall g\in G\}$; moreover, that result lead to explicit expressions for each $\H_\lb$ as well. Other important cases were obtained in \cite{Go}.

Such results are sometimes called results of the "commutation of quantization and reduction" type. None of those results cover the case $G=\R^k$ and $\Si=\R^{2n}$; the main reason why that happen is that most of the approaches in the literature end up proposing $\H^0$ as the Hilbert space $\H_0$, and that space could be trivial if $\H$ is infinite dimensional or $G$ is not compact. For instance, $\H^0$ is trivial when $k=1$, $G=\R$, and the action is given by the Hamiltonian flow of either a position coordinate Hamiltonian or a momentum coordinate Hamiltonian or the free Hamiltonian. Following the ideas motivating this article (see the introduction), we propose a different approach: Recall that there is a unitary representation $U$ of $G$ on $\H$ and consider the decomposition of $U$ as a direct integral over the space of irreducible representations of $G$ (this can be done for a large class of groups called type I groups \cite{Mau}). Notice that, under certain conditions, the space of irreducible representations of $G$ can be identified with the set of coadjoint orbits of $\mathfrak{g}^*$ (this is another famous result in geometric quantization \cite{K}). We conjecture that the Hilbert space $\H_\lb$ coming from $\Si_\lb$ coincides with the fiber Hilbert space required in the decomposition of $U$ corresponding to $\lb$. Notice that our $\H(\lb)$ can be obtained in that way if we take $U(t)=e^{i(t_1H_1+\cdots+t_k H_k)}$. 

We shall also notice that the dimensional restriction that we imposed in our framework, i.e. $k<n$, might not be necessary in geometric quantization. For example, if $h_j(x,\xi)=\xi_j$ with $j=1,\cdots, n$, then the composition of the corresponding Hamiltonian flows coincides with the translation action of $\R^n$ on itself lifted to $T^*\R^n$. Such example has been studied in geometric quantization even replacing $\R^n$ by an arbitrary Lie group $G$ (for instance, see corollary 3.13. in \cite{HM}).

We now reflect on how our ideas relate to deformation quantization. Recall that in deformation quantization the goal is to study the star products on a given Poisson algebra. The relation with canonical quantization comes from the so called Moyal star product, which we denote by $\star_M$. Moyal star product satisfies $\Op_\hb(f\star_M g)=\Op_\hb(f)\Op_\hb(g)$, where $\Op_\hb$ denotes Weyl quantization \cite{BFFLS, EGV}. Since $\A=C^\infty(\Si/G)$ and $\Si/G$ is a Poisson manifold, the existence of a star product for the algebra of constants of motion is guaranteed by Kontsevich's formality theorem \cite{K}. However, in the light of the results of this article, we expect that we can construct a new star product $\star_d$ defined on $\A$ such that $\Op^d(f\star^d g)=\Op^d(f)\Op^d(g)$. We also expect that such star product satisfies the identity $f\star^d h_j=h_j\star^d f$.
Moreover, we expect that such star product admits an expression of the form $f\star_d g=\int f_\lb\star_\lb g_\lb \de\lb$, where $\star_\lb$ is a star product on $C^\infty(\Si_\lb)$. Such expression might be generalized to more general context (with $G\neq\R^k$) and without requiring the construction of $\Op^d_\hb$. 

Concerning the preservation of constants of motion and commutation of quantization and reduction, we can wonder when a given star product on $C^\infty(\Si)$ can be restricted to $\A$ and when such restriction coincide with the star product $\star^d$. 

%For instance, we proved in theorem \ref{wpcm} that Weyl quantization preserves constants of motion of $h(x,\xi)=||\xi||^2$, this suggest that Moyal star product can be restricted to $\A$ in this case.

%\subsection*{Acknowledgments:}
%I would like to thank SISSA (Scuola Internazionale Superiore di Studi Avanzati), where I developed a large part of this article. Special thanks to L. Dabrowski for his support and understanding during that period. Also, thanks to G. Dell'Antonio, G. Panati, G. Piacitelli, and G. Landi for their advise and comments.

%I would like to thank B. san Martin and R. Urz\'ua for their support and technical advise, without it I would have not obtained some of the results in this article. Also, I am grateful to E. Gasparim, M. M\u antoiu and T. Veloz for reading the manuscript and offering valuable criticism.

%................................................. ....................................................

\end{document}